\newtheorem{theorem}{\textbf{Theorem}}
\newtheorem{remark}{\textbf{Remark}}
\begin{document}

\title{Federated Multi-Agent Actor-Critic Learning for Age Sensitive Mobile Edge Computing \thanks{The work was supported by Beijing Natural Science Foundation No. 4202030.

        Zheqi Zhu, Shuo Wan, and Pingyi Fan are with Beijing National Research Center for Information Science and Technology and the Department of Electronic Engineering, Tsinghua University, Beijing 10084, China (e-mail:fpy@tsinghua.edu.cn).

        Khaled B. Letaief is with the Department of ECE, Hong Kong University of Science and Technology, Hong Kong and Pengcheng Lab, Shenzhen 518000, China.}}


\author{Zheqi Zhu, Shuo Wan, Pingyi Fan,~\IEEEmembership{Senior Member,~IEEE}, Khaled B. Letaief,~\IEEEmembership{Fellow,~IEEE}}


\maketitle
\IEEEpeerreviewmaketitle

\begin{abstract}
    As an emerging technique, mobile edge computing (MEC) introduces a new scheme for various distributed communication-computing systems such as industrial Internet of Things (IoT), vehicular communication, smart city, etc. In this work, we mainly focus on the timeliness of the MEC systems where the freshness of the data and computation tasks is significant. Firstly, we formulate a kind of age-sensitive MEC models and define the average age of information (AoI) minimization problems of interests. Then, a novel mixed-policy based multimodal deep reinforcement learning (RL) framework, called heterogeneous multi-agent actor-critic (H-MAAC), is proposed as a paradigm for joint collaboration in the investigated MEC systems, where edge devices and center controller learn the interactive strategies through their own observations. To improve the system performance, we develop the corresponding online algorithm by introducing the edge federated learning mode into the multi-agent cooperation whose advantages on learning convergence can be guaranteed theoretically. To the best of our knowledge, it's the first joint MEC collaboration algorithm that combines the edge federated mode with the multi-agent actor-critic reinforcement learning. Furthermore, we evaluate the proposed approach and compare it with popular RL based methods. As a result, the proposed algorithm not only outperforms the baselines on average system age, but also promotes the stability of training process. Besides, the simulation outcomes provide several insights for collaboration designs over MEC systems.
\end{abstract}

\begin{IEEEkeywords}
    mobile edge computing, joint collaboration, multimodal learning, multi-agent deep reinforcement learning, mixed policies, federated learning.
\end{IEEEkeywords}

\section{Introduction}
\subsection{Backgrounds}
In the past decades, the number of smart devices has been massively deployed in numerous fields such as industrial IoT, nets of vehicles, environment monitoring networks. As an emerging paradigm for such distributed data systems, edge computing (EC), especially mobile edge computing (MEC), couples the communication with the computation and expands the cloud computing (CC) through allowing the computation to be executed at the edge nodes deployed along the path from data sources to the center cloud. In real applications, an MEC system is composed of three layers. The bottom level contains the sources of the data such as users' equipments (UEs), sensors or web cameras that generate the data in order to provide certain services. At the middle layers, the edge devices can be smartphones, smart routers, intelligent base stations and vehicles with processors on board, which play a relay role. Edge devices can collect data from edge sensors, assist to execute some computation tasks and then relay them to cloud center \cite{shi2016edge}. At the top level, the data and the service requirements, refer to tasks, are assembled at the cloud center. The edge processing schemes need to be designed, so as to improve the system efficiency as well as flexibility \cite{sun2016edgeiot}. That is, it can efficiently utilize the edge computation capabilities to reduce the redundant communication consumption of the networks.

Recently, benefiting from the development of 5G networks, more data can be transmitted with lower latency, which make it possible for MEC systems to be applied in several promising real-time applications. For instance, MEC can support the augment reality (AR), virtual reality (VR), and to enhance the user's experience and promote the quality of the media streaming. In such cases, the images or videos are transmitted and processed in the network and the demands for the data rate and timeliness become stricter than the conventional distributed scenes \cite{hu2015mobile}. Besides, in some urban security scenes such as urgency monitoring, the delay of the data transmission and the processing directly impacts the quality of service \cite{mao2017survey}. To stress out such time-sensitive scenarios, age of information (AoI) \cite{abd2019role} has been introduced to the related investigations as a metric of data freshness.

Generally, the computing resources and capacities of the mobile edge devices are limited. Additionally, the arrival of data as well as the computation tasks are dynamic, and the communication rates between the entities are also constrained by restricted bandwidth. Thus, the scheduling of the MEC systems should be well-designed. To some degree, such a problem is usually cast into a joint stochastic optimization problem with the certain targets. In particular, considering the strategies for the entity operation in the MEC networks, the MEC collaboration shall be fully studied \cite{abbas2017mobile}. The collaborations are mainly considered from the following three aspects: 1) The resource management, including computation resources, power and bandwidth allocation, etc. \cite{liu2020resource,du2018energy}; 2) The edge-level control, i.e., the data collection and trajectory planning for mobile edge devices \cite{8767017}; 3) The data scheduling, including the task execution at the edge devices, the computation assignment for the edge network and the data offloading to the cloud center \cite{mach2017mobile,zhao2019novel}. The critical point of the joint collaboration is to achieve the global optimal strategies for MEC systems. Unfortunately, due to the dynamics of the MEC environment and the complex coupling of the entities from different layers, the related optimization problems are always non-convex and NP-hard \cite{zhao2019computation}. Therefore, the conventional vanilla optimization methods may not work well for the joint MEC collaboration problems and the iterated online approaches with better learning ability and intelligence shall be investigated.

\subsection{Motivations}
Further, for future 6G networks, edge-native artificial intelligence (AI) is regarded to be a potential subject which leverages the MEC systems together with the distributed computing applications. The concepts of \textit{AI for EC} and \textit{EC for AI} are promising visions for future data systems \cite{loven2019edgeai}. In this article, we mainly focus on \textit{AI for EC} and consider how to exploit AI or deep learning (DL) techniques to improve the performance of edge computing systems. Reinforcement learning (RL), especially deep reinforcement learning (DRL) \cite{mnih2015human} has been incorporated into several decision-making scenarios such as automatic robot control and game playing \cite{ye2020mastering}. A widely used class of reinforcement learning approaches are value based RL, e.g. Q-learning and deep Q-network (DQN) which predict the system targets of each action and the make decisions according to an action-value function \cite{sutton2018reinforcement}. For the continuous control or the cases where the action space is extremely large, policy based methods such as DDPG (Deep Deterministic Policy Gradient) are introduced \cite{lillicrap2015continuous}. The basic notion is that an actor module and a critic module are built to fit the best strategies and the evaluation values respectively. Moreover, multi-agent reinforcement learning (MARL) \cite{busoniu2008comprehensive} and federated learning (FL) based reinforcement learning \cite{zhuo2019federated} are also developed for the distributed scenes where agents learn to make decisions through their local observations and cooperate for the same system targets.

Motivated by these, we consider a multi-agent reinforcement learning approach for joint MEC collaboration to maintain the data freshness. In particular, we investigate the cooperative learning framework with mixed policies where the agents learn multiple strategies through local observations and states. Besides, the communication mechanism among learning agents is also a critical point to be studied in this article.

\subsection{Related Works}
In the literature, some related researches on the joint collaboration, the age-optimal optimization and the applications of deep learning in MEC systems has been broadly studied.

Ndikumana \textit{et al.} \cite{ndikumana2019joint} proposed a joint framework for communication, task computation, data caching and the distributed control in big data edge computing and evaluated several performance metrics for different procedures. To enhance the mobility and adaptability of the edge computing, MEC systems with unmanned aerial vehicle (UAV) assisted were investigated \cite{merwaday2015uav,sharma2016uav}. The model formulations, including task arrival, computation, data scheduling and communication, were respectively studied in \cite{emara2020spatiotemporal,cao2019intelligent,ning2020partial,matolak2015unmanned}. Zhou \textit{et al.} \cite{zhou2018uav} and Liu \textit{et al.} \cite{liu2019uav} studied the energy efficient joint optimization of UAV-assisted system, and latency aware collaboration were also studied in \cite{liu2017latency}. In terms of the age sensitive MEC system, AoI based metrics are introduced to measure the freshness of data \cite{costa2014age}. Wang  \textit{et al.} \cite{wang2021minimizing} proposed the age of critical information in mobile computing and developed a partially observed scheduling approach. The AoI aware radio resource allocation of multi-vehicular communications was studied in \cite{chen2020age}. Besides, Liu \textit{et al.} \cite{liu2018age} and Hu \textit{et al.} \cite{hu2020aoi} studied the age optimal joint collaboration in time sensitive MEC systems.

A number of researches exploited the multi-stage optimization method or iterative algorithms to solve the joint collaboration problems. However, these existing approaches suffered from the challenges that the real-world scenes are dynamics and complicated, which makes it difficult for these algorithms to extract the latent connections between the environment variation and the entity operation. Hence, some reinforcement learning based online algorithms are developed \cite{tong2020deep} for MEC data systems. The Q-learning and DQN based RL methods were applied for resource allocation \cite{wang2019smart}, task offloading \cite{li2018deep}, as well as trajectory planning \cite{wan2019towards}. Particularly, Chen \textit{et al.} \cite{chen2018optimized} addressed the limitation of Q-learning and proposed a double DQN based offloading algorithm. Additionally, taking the multiple edge devices into consideration, some extended versions of multi-agent reinforcement learning methods were proposed. A multi-agent actor-critic based offloading approach was designed in \cite{wang2020multi1}. Peng \textit{et al.} \cite{peng2020multi} and Wang \textit{et al.} \cite{wang2020multi} adopted multi-agent DDPG (MADDPG) frameworks to resource management and trajectory planning in MEC networks. Besides, in \cite{zhang2020uav}, the vehicular layer MADDPG with attention mechanism was studied for multi-UAV assisted networks. Moreover, by employing federated learning into multi-agent control \cite{kumar2017federated}, Wang \textit{et al.} \cite{wang2020federated} combined the FL and DQN as a decentralized cooperative framework to improve the performance of edge caching.
\subsection{Contributions \& Paper Organization}
The main contributions of this work can be summarized as follows:
\begin{itemize}
    \item[$\bullet$] We put forward the system model and a multi-agent Markov decision process (MDP) formulation to characterize the problems in age sensitive mobile edge computing for further investigations.
    \item[$\bullet$] We build a simulation environment as a \textit{gym} module\footnotemark[1] for these MEC systems, which can be easily employed to test the performance of different collaboration approaches.
    \item[$\bullet$] We present a multi-agent deep reinforcement learning framework, H-MAAC, for MEC joint collaboration. It's a multimodal framework that takes heterogeneous inputs to learn the mixed policies for trajectory planning, data scheduling and resource allocation.
    \item[$\bullet$] We develop the corresponding multi-agent cooperation algorithm for the online joint collaboration by introducing the edge federated learning mode into the MEC collaboration, abbreviated as EdgeFed H-MAAC\footnote[1]{The codes for the MEC simulation environment and the evaluation of several RL collaboration algorithms are publicly available at \url{https://github.com/Zhuzzq/EdgeFed-MARL-MEC}}, which outperforms DDPG and MADDPG on both system metrics and convergence. To the best of our knowledge, it’s the first joint MEC collaboration algorithm that combines the edge federated mode with the multi-agent actor-critic reinforcement learning.
    \item[$\bullet$] The convergence analysis of the proposed MEC collaboration algorithm is also provided in theory which implies that the EdgeFed H-MAAC method reaps better convergence. Besides, the parameter design is also discussed.
\end{itemize}

The remaining of the article is organized as follows. In Section \ref{section system} we propose an age sensitive MEC system model and present the problems of interests in this work. In Section \ref{section alg}, we firstly formulate the Markov decision process for the age minimization problem, and then build up a multi-agent edge federated actor-critic learning framework as well as develop the corresponding cooperation algorithm. The learning convergence of the proposed algorithm are presented as a theorem in Section \ref{section convergence}. The simulation results and more discussions are presented in Section \ref{section sim}. Finally, in Section \ref{section conclusion}, we conclude this work and give several potential research directions.

\section{System Model and Problem Formulation}
\label{section system}
In this section, a classic 3-tier multi-agent edge computing system will be firstly introduced, including the corresponding communication and operation models. Then, the problems concerned in age sensitive scenarios will be defined.

\begin{table*}[htbp]
      \caption{\upshape Main notations.}
      \label{notation}
      \newcommand{\tabincell}[2]{\begin{tabular}{@{}#1@{}}#2\end{tabular}}
      \centering
      \begin{tabular}{c|c|l}
            \hline
            \rowcolor{gray!20} & Notations                                                                                                                             & Description                                                                               \\
            \hline
            \multirow{22}{*}{\makecell[c]{Environment                                                                                                                                                                                                              \\Notations}}
                               & $t$                                                                                                                                   & The $t$-th time slot.                                                                     \\
                               & $N\!_s$                                                                                                                               & The number of the data sources.                                                           \\
                               & $N\!_e$                                                                                                                               & The number of the edge devices.                                                           \\
                               & $\mathbb{S}=\left\{S_1,\cdots,S_{N\!_s}\right\}$                                                                                      & The set of data sources.                                                                  \\
                               & $\mathbb{E}=\left\{E_1,\cdots,E_{N\!_e}\right\}$                                                                                      & The set of edge devices.                                                                  \\
                               & $\mathbb{C}$                                                                                                                          & Cloud data center.                                                                        \\
                               & $d_{.}(t)$, $w_{.}(t)$, $idx_{.}(t)$                                                                                                  & \makecell[l]{Attributes of each packet considered in this research, i.e., the data size,  \\the elapsed time and the index of its source.} \\
                               & $D_n(t)=\Big\{\left[d_{n,i}(t),w_{n,i}(t),n\right]\Big\}$                                                                             & $S_n$'s data buffer at $t$-th time slot.                                                  \\
                               & $\boldsymbol{\Delta}(t)=\big\{\Delta_1(t),\cdots,\Delta_{N\!_s}(t)\big\}$                                                             & The list storing the age of all data sources at $t$-th time slot.                         \\

                               & $D^k_{col}(t)=\bigg\{\left[d_{col}^{k,i}(t),w_{col}^{k,i}(t),idx^{k,i}_{col}(t)\right]\bigg\}$                                        & $E_k$'s collected data buffer at caching the data collected but not processed.            \\
                               & $D^k_{exe}(t)=\bigg\{\left[d_{exe}^{k,i}(t),w_{exe}^{k,i}(t),idx^{k,i}_{exe}(t)\right]\bigg\}$                                        & $E_k$'s executed data buffer caching the processed data waiting to be offloaded.          \\
                               & $B_{col}^k$                                                                                                                           & The collected data buffer size of $E_k$.                                                  \\
                               & $B_{exe}^k$                                                                                                                           & The executed data buffer size of $E_k$.                                                   \\
                               & $\boldsymbol{\rm pos}_k(t)=\big[x_k(t),y_k(t),h_k(t)\big]$                                                                            & The position of $E_k$ at $t$-th time slot.                                                \\
                               & $r^k_{move}$, $r^k_{obs}$, $r^k_{collect}$                                                                                            & \makecell[l]{The moving radius (per time slot), observing radius                          \\and the collecting coverage radius of $E_k$.} \\
                               & $\{\boldsymbol{a}_k(t)\}=\Big\{\left[\boldsymbol{\rm move}_k(t),\ \boldsymbol{\rm exe}_k(t),\ \boldsymbol{\rm off}_k(t)\right]\Big\}$ & The action of $E_k$ at $t$, including movement, execution and offloading.                 \\
                               & $\boldsymbol{b}(t)=\left[b_1(t), \cdots, b_{N\!_e\!}(t)\right]$                                                                       & Bandwidth allocation for edge agent offloading at $t$-th time slot.                       \\
            \hline

            \multirow{14}{*}{\makecell[c]{Learning                                                                                                                                                                                                                 \\Framework\\Notations}}& $\mathcal{A}_k$, $\mathcal{A}'_k$                                                            & The actor net and target actor net on $k$-th learning agent.                                                                                         \\
                               & $\mathcal{C}_k$, $\mathcal{C}'_k$                                                                                                     & The critic net and target critic net on $k$-th learning agent.                            \\
                               & $\boldsymbol{\mathcal{A}}=\big\{\mathcal{A}_1,\cdots,\mathcal{A}_{N\!_e},\mathcal{A}_\mathbb{C}\big\}$                                & The set of actor net learning agents for edge devices and center controller.              \\
                               & $\boldsymbol{\mathcal{C}}=\big\{\mathcal{C}_1,\cdots,\mathcal{C}_{N\!_e},\mathcal{C}_\mathbb{C}\big\}$                                & The set of critic net learning agents for edge devices and center controller.             \\
                               & $\boldsymbol{\mathcal{A}'}=\big\{\mathcal{A}'_1,\cdots,\mathcal{A}'_{N\!_e},\mathcal{A}'_\mathbb{C}\big\}$                            & The set of target actor net learning agents corresponding to $\boldsymbol{\mathcal{A}}$.  \\
                               & $\boldsymbol{\mathcal{C}'}=\big\{\mathcal{C}'_1,\cdots,\mathcal{C}'_{N\!_e},\mathcal{C}'_\mathbb{C}\big\}$                            & The set of target critic net learning agents corresponding to $\boldsymbol{\mathcal{C}}$. \\
                               & $\big\{\boldsymbol {s}_k(t)\big\}$, $\boldsymbol s_\mathbb{C}(t)$                                                                     & The input states of edge device and center controller at $t$-th time slot.                \\
                               & $\boldsymbol{\theta}_k$, $\boldsymbol{\theta}'_k$                                                                                     & The parameters of $k$-th learning agent's actor net and target actor net.                 \\
                               & $\boldsymbol{\phi}_k$, $\boldsymbol{\phi}'_k$                                                                                         & The parameters of $k$-th learning agent's actor net and target actor net.                 \\
                               & $\eta_{\mathcal{A}}$, $\eta_{\mathcal{C}}$                                                                                            & Learning rates for actor/critic nets.                                                     \\
                               & $\boldsymbol{\mathcal{B}}$                                                                                                            & Experience buffer for replay.                                                             \\
                               & $\gamma\in[0,1]$                                                                                                                      & Reward/Penalty decay.                                                                     \\
                               & $\tau\in[0,1]$                                                                                                                        & Coefficient for target net updates.                                                       \\
                               & $\omega\in[0,1]$                                                                                                                      & Federated factor for edge updating.                                                       \\

            \hline
      \end{tabular}
\end{table*}

\begin{figure}[htbp]
      \centering
      \begin{minipage}[b]{0.48\textwidth}
            \includegraphics[width=1\textwidth]{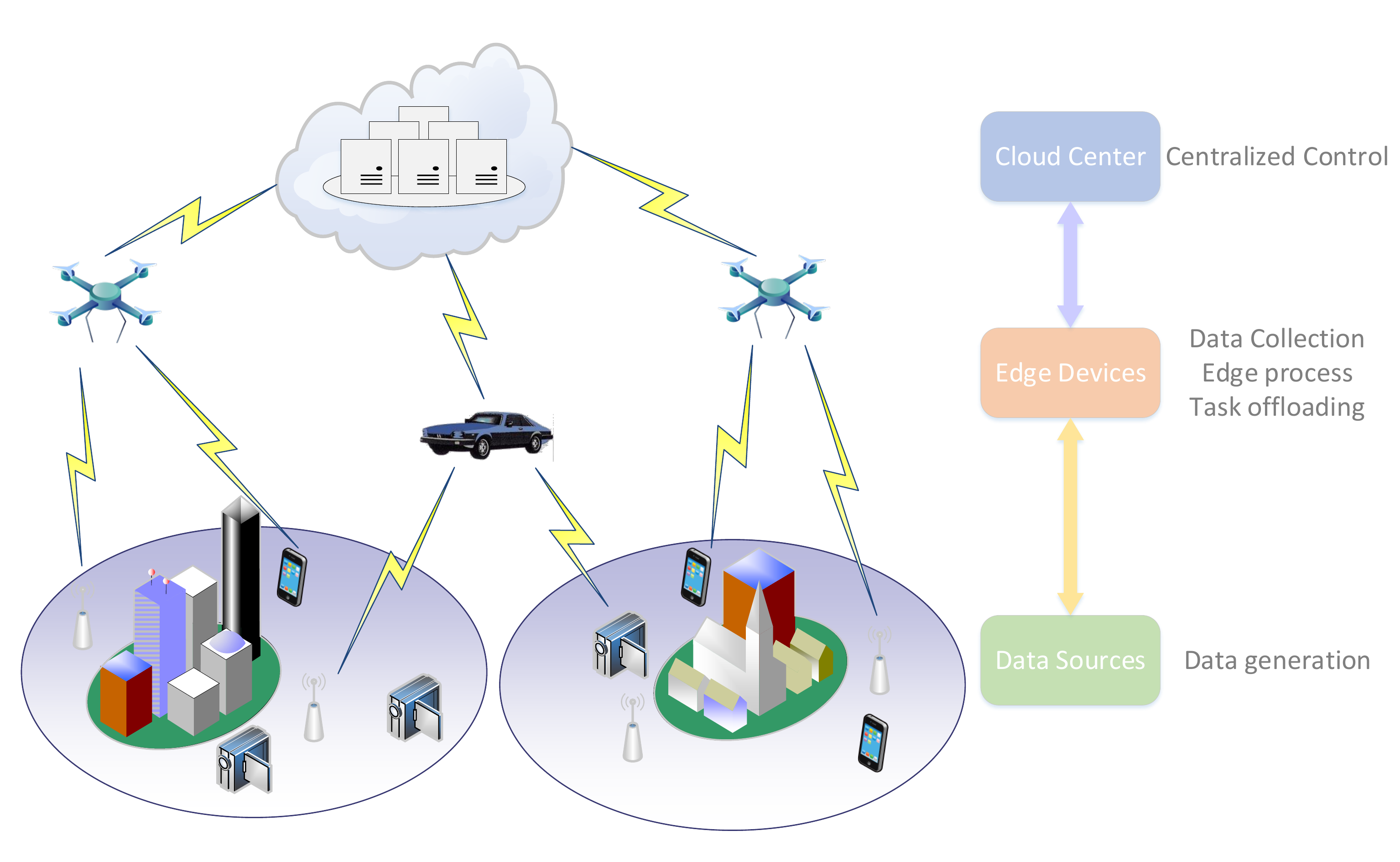}
      \end{minipage}%
      \caption{The 3-Tier Mobile Edge Computing System.}
      \label{system}
\end{figure}
As shown in Fig. \ref{system}, a classic 3-tier edge computing system with data collection is studied. The bottom level are data sources ($\mathbb{S}$) such as sensors, web cameras and users' equipments that continuously generate data packets. Then middle level consists of mobile edge devices ($\mathbb{E}$) such as automobile base stations and unmanned aerial vehicle base stations (UAV-BSs) which manage to move in the area, communicate with the data sources and the cloud center, as well as carry out some processing of the data. At the top level, the cloud center ($\mathbb{C}$) is usually the data center with computing clusters that execute the computing tasks, store the data and implement centralized control for the whole system. The related models of data generation, edge mobility, edge operation, data scheduling, transmission and center controller will be introduced in detail. Additionally, some necessary notations and their explanations are listed in TABLE~\ref{notation}. To explicitly investigate the operations in each time slot, we divide continuous time into small slots and use the integer $t$ to denote the $t$-th time slot.
\subsection{Data Generation Model}
Consider the data sources generate the packets of same formatted data structures, for instance, the sensor data from distributed sensors in IoTs, the image or video data from web cameras in urban security monitoring or virtual/augment reality (VR/AR) scenarios, and the data for a certain class of computing tasks from users with smart equipments. Assuming that the data sources generates independently, we describe the data packets using a tuple of data size, elapsed time and the index of their sources, denoted by $d_{.}(t)$, $w_{.}(t)$, $idx_{.}(t)$ respectively. The arrival packets are temporally stored in the source buffers and wait to be collected.

\subsection{Edge Mobility Model}
In the MEC system introduced above, edge devices are considered to be vehicular base stations with mobility and computing capacity. All edge devices move in the area, collect the packets from data sources, process the data locally and offload the data to cloud center. In the following discussion, we assume that the edge devices $\mathbb E$ are a set of UAVs flying at certain heights over the data sources. Then, the mobility of edge devices can be modelled as:
\begin{equation}
      \label{move}
      \boldsymbol{\rm pos}_k(t+1)=\boldsymbol{\rm pos}_k(t)+\boldsymbol{\rm move}_k(t)
\end{equation}
where $\boldsymbol{\rm pos}_k(t)$ denotes the position of $E_k$ at the beginning of $t$-th time slot, and the movement $\boldsymbol{\rm move}_k(t)$ in each time slot is constrained by
\begin{equation}
      \label{mobility}
      \Big\Vert \boldsymbol{\rm move}_k(t)\Big\Vert_2\leq r^k_{move}
\end{equation}
where $\Vert\cdot\Vert_2$ is the 2-norm of the vectors.

\subsection{Edge Processing \& Data Scheduling Model}
While edge agent $E_k$ hovering over the data source $S_n$, all the packets in $S_n$'s data buffer will be collected and take up one piece of the collected data buffer, $D^k_{col}(t)$. The data pieces in $D^k_{col}(t)$ will be scheduled to be preprocessed using local processor on $E_k$. Since the data packets are assumed to be of formatted structure and the edge preprocess algorithms are determined, the computation on edge devices can be modelled by the data size and the edge computing rate \cite{liu2016delay}. Then, the accumulated execution duration of packets collected from $S_n$ with $E_k$ executing edge processing at the time slot $t$ can be obtained by the equation,
\begin{equation}
      \label{edge execution}
      \tau^k_n(t)=\frac{\sum\limits_i \mathbbm{1}_{\left\{idx^{k,i}_{col}(t)=n\right\}}\cdot d^{k,i}_{col}(t)}{f^k_c(t)}
\end{equation}
where $f^k_c(t)$, related to CPU-cycle frequency, is the $E_k$'s edge execution data rate for the preset tasks at time slot $t$.

On each edge device $E_k$, collected data buffer $D^k_{col}(t)$ cache those data from data sources but have not been edge processed. Assume that in every operation slot, $E_k$ allocates its edge computation resources for one data piece in the buffer. Thus, the edge execution decision at the $t$-th slot can be denoted by a one-hot vector,
\begin{equation}
      \label{exe op}
      \boldsymbol{\rm exe}_k(t)=\big[{\rm exe}_1(t),\cdots,{\rm exe}_{B_{col}^k}(t)\big]
\end{equation}
where ${\rm exe}_i(t)\in\left\{0,1\right\}$ is the CPU allocation flag for each data piece in $E_k$'s collected data buffer and it obviously satisfies the condition,
\begin{equation}
      \sum\limits_{i=1}^{B_{col}^k}{\rm exe}_i(t)=1\ \ \ \ \ for\ k=1,\cdots,N_e.
\end{equation}

After local execution on edge, the data will be cached in executed data buffers, $\big\{D^k_{exe}(t)\big\}$, and wait to be offloaded to the cloud center. Similarly, assume that in every operation slot, $E_k$ decides to offload one piece of packets in $D^k_{exe}(t)$. Thus, the offloading scheduling can be described by a one-hot vector,
\begin{equation}
      \label{off op}
      \boldsymbol{\rm off}_k(t)=\big[{\rm off}_1(t),\cdots,{\rm off}_{B_{exe}^k}(t)\big]
\end{equation}
where ${\rm off}_i(t)\in\left\{0,1\right\}$ are the offloading decisions for each data piece, and they also satisfy the condition,
\begin{equation}
      \sum\limits_{i=1}^{B_{exe}^k}{\rm off}_i(t)=1\ \ \ \ \ for\ k=1,\cdots,N_e.
\end{equation}

\subsection{Communication Model}
There are 3 kinds of transmission links between the entities in the environment, containing source-edge, edge-edge and edge-cloud. In this work, we investigate the cases where edge devices only share the states, observations and learning parameters which means that neither data nor tasks are transmitted between edge devices. Therefore, the transmission costs of edge-edge communication shall be negligible. The transmission process of source-edge and edge-cloud can be modelled as an air to ground (A2G) channel \cite{matolak2015unmanned} where line-of-sight (LoS) path loss as well as none-line-of-sight (NLoS) loss shall be considered \cite{al2014optimal}.
\begin{equation}
      \label{PL}
      PL_\xi(t)=\Big(\frac{4\pi f}{c}\Big)^2\cdot d^2(t)\cdot\eta_\xi
\end{equation}
where $d(t)=\sqrt{x^2(t)+y^2(t)+h^2(t)}$ is the distance between the edge device and the ground entity (a chosen data sources or cloud center), $f$ is the carrier frequency and $c$ is the speed of light. Besides, $\eta_\xi$ with $\xi=\{0,1\}$ represents the excessive path loss of LoS and NLoS cases.
Hence, the average A2G path loss of the communication channel for $E_k$-$S_n$ (or $E_k$-$\mathbb{C}$) at $t$-th slot can be obtained by
\begin{equation}
      \label{path loss}
      \overline{L}_{k,n}(t)=p_0(t)\cdot PL^{k,n}_0(t)+p_1(t)\cdot PL^{k,n}_1(t)
\end{equation}
where $p_0(t)$, $p_1(t)$ are the probability of LoS and NLoS which can be closely approximated by the following form:
\begin{equation}
      \label{p_los}
      p_0(t)=\frac{1}{1+a\exp\big(-b(\psi-a)\big)}
\end{equation}
where $\psi=\tan^{-1}\Big(\frac{h(t)}{\sqrt{x^(t)+y^2(t)}}\Big)$ is the angle between the edge-ground link and the horizontal plane. Moreover, $a$ and $b$ are parameters related to the environment. Then, considering the frequency division mode with total bandwidth $W$, for the channel with allocated bandwidth proportion $b_{k,n}(t)$, the transmission rate between $E_k$ and $S_n$ (or $\mathbb{C}$) can be expressed by
\begin{equation}
      \label{trans rate}
      R_{k,n}(t)=b_{k,n}(t)W\log_2\Big(1+\frac{P^k_{tr}(t)}{\overline{L}_{k,n}(t)N_0b_{k,n}(t)W}\Big)
\end{equation}
where $N_0$ is the noise power spectral density and $P^k_{tr}(t)$ represents the power for transmission satisfying
\begin{equation}
      \label{power}
      0\leq P^k_{tr}(t)\leq P^k_{tr,max}.
\end{equation}
\subsection{Problem Formulation}
In the age sensitive scenarios, the freshness of the data shall be significantly stressed. Recapping the concept, age of information (AoI) \cite{kosta2017age}, in the system introduced above, the age of data source $S_n$ at $t$-th time slot is defined as the subtraction of current time and the generation time of the latest data at the receiver \cite{costa2016age}, which can be expressed as:
\begin{equation}
      \label{age}
      \Delta_n(t)=t-T_g^n(t)
\end{equation}
where $T_g^n(t)$ denotes the generation time of $S_n$'s latest data packet received by the cloud center. Different from the delay of each data packet, $\Delta_n(t)$ is a duration measure for each data source which implies how frequently the data of $S_n$ are collected, edge executed and offloaded, which is especially important to such time-sensitive MEC system. Thus, an $N_s$-dimension vector $\boldsymbol{\Delta}(t)$ is used to record the age of each data source. To maintain the timeliness, the target is to minimize the average age of all data sources by controlling the edge devices, scheduling the data and allocating the resources effectively. Therefore, taking the above system models into account as the constraints, we obtain the following optimization problem:
\begin{align}
      \label{P1}
      \mathcal P_1:\quad & \min\limits_{\{\boldsymbol{a}_k(t)\},\boldsymbol{b}(t)}\ \ \Bigg[\overline{\boldsymbol{\Delta}}(t):=\frac{1}{N_s}\sum\limits_{n=1}^{N_s}\Delta_n(t)\Bigg] \\
                         & \begin{array}{l@{\ }l@{}l@{\ }l}
            \notag
            \mbox{s.t.}\quad & \quad\ Eq.(\ref{move})\ \ to\ \ Eq.(\ref{power}). \\
      \end{array}
\end{align}

Before solving the optimization problem above, let us rethink the introduced system models and corresponding problems from following three perspectives. Firstly, the constraints of $\mathcal P_1$ are heterogeneous and the optimization objectives are of two stages (edge stage and cloud stage). Thus, the optimal solutions cannot be explicitly expressed and an iterative algorithm shall be adopted. Secondly, note that $\mathcal P_1$ is an instant optimization problem and the environment sates are stochastic, which means that the strategies for each time slot should be variant. Moreover, because $\Delta_n(t)$ depends on not only current states but also the states of previous time, the optimal solutions for each time slot may not lead to full-time optimization. For above reasons, we are to formulate the optimization problem into MDP game and discuss the reinforcement learning based solutions as \cite{wiering2012reinforcement}.

\section{An Edge Federated Actor-Critic Learning Framework for Multi-Agent Cooperation}
\label{section alg}
In this section, a Markov decision process (MDP) will be modelled for optimization problem $\mathcal{P}_1$. And then, with the MDP formulation for MEC collaboration, we shall adopt multi-agent reinforcement learning approaches to solve the MDP problem \cite{van2012reinforcement}. Q-Learning and DQN are popular value based RL methods which learn the action-value function $Q(\boldsymbol{s},\boldsymbol{a})$ related to system reward/penalty. However, while the action spaces grow too large, the search for optimal actions becomes extremely hard. To overcome the complexity of the action space, policy based methods such as A2C and DDPG are introduced, where dual neural networks are employed to estimate the action $\boldsymbol{a}$ and $Q$ value respectively. Moreover, in multi-agent scenes, the single learning agent mode or centralized RL requests a large neural network with a complex structure and massive model parameters, which may suffer from some difficulties on training convergence and model generalization \cite{littman1994markov,panait2005cooperative}. For above reasons, we present a heterogeneous multi-agent actor critic learning framework (H-MAAC) for MEC collaboration and the corresponding algorithm to solve the optimization problem. Besides, the convergence analysis of the proposed algorithm will be given.
\subsection{MDP Formulation for MEC Collaboration}
Markov decision process is a common model to formulate such environment-interactive systems \cite{puterman2014markov}. An MDP game can be expressed by a tuple with four elements, $\mathcal{M}\big\{S,A,R,\mathcal{T}\big\}$, standing for the states, actions, rewards and transition policies respectively. Note that in optimization problem $\mathcal{P}_1$, the target is to minimize the objective average AoI. Therefore, we rewrite the element $R$ as $P$ to represent the penalty of the game and obtain the substitute tuple of MDP formulation, $\mathcal{M}\big\{S,A,P,\mathcal{T}\big\}$. Furthermore, a multi-agent extended version of MDP contains a number of agents to match the scenarios with multiple controllable entities. In the above MEC collaboration system, all mobile edge devices and the center controller can be regarded as the agents. Overall we have $N_e$ edge agents and one center agent. All the agents observe their states and act with certain strategies.
\subsubsection{States}
For edge agents, their states $\{\boldsymbol{s}_k(t)\}$ contain the local observations of the environment, the status of the edge devices including buffer states, allocated offloading bandwidth, etc. The states of center agent, $\boldsymbol{s}_\mathbb{C}(t)$, consists of all edge device status.
\subsubsection{Actions}
In the cases of interest, edge devices move, collect data, locally execute and offload tasks to cloud center. Then, the edge agents' actions are composed of movement, execution decision and offloading scheduling, denoted by
\begin{equation}
      \big\{\boldsymbol{a}_k(t)\big\}=\Big\{\left[\boldsymbol{\rm move}_k(t),\ \boldsymbol{\rm exe}_k(t),\ \boldsymbol{\rm off}_k(t)\right]\Big\}
\end{equation}
Meanwhile, the cloud center controller allocates the offloading bandwidth for each edge device. Thus, the action of center agent, $\mathrm{a}_\mathbb{C}(t)$, is the one-sum bandwidth proportion vector $\boldsymbol{b}(t)$.
\subsubsection{Penalties}
Since we focus on the edge collaboration in age sensitive MEC system where all agents collaborate to minimize the average age of data sources, the global penalty will be shared for all agents. Then, the current penalty at $t$-th slot for each agent can be described as
\begin{equation}
      \label{penalty}
      p_k(t)=\overline{\Delta}(t)\ \ for\ k=1,\cdots,N_e,\mathbb{C}.
\end{equation}
To investigate the global optimization of the system, the following long-time penalty with decay is considered:
\begin{equation}
      \label{long-time penalty}
      P_k(t)=\sum\limits_{i=0}^{T}\gamma^i\cdot p_k(t+i)
\end{equation}
where $\gamma$ is the decay coefficient and $T$ is the length of time window.
\subsubsection{Transition Policies}
For MEC system, it is hard to find a formatted policy to cover all the state transitions of data sources, edge devices, cloud center as well as resource allocation. As a result, we use
\begin{equation}
      \label{transition}
      \mathcal{T}\Big(\{\boldsymbol{s}_k(t+1)\},\boldsymbol{s}_\mathbb{C}(t+1)\Big|\{\boldsymbol{s}_k(t)\},\boldsymbol{s}_\mathbb{C}(t),\{\boldsymbol{a}_k(t)\},\boldsymbol{a}_\mathbb{C}(t)\Big)
\end{equation}
to represent the entities' interactions in the system.
\subsection{An Edge-Federated Heterogeneous Multi-Agent Actor-Critic Framework}
\begin{figure}[htbp]
      \centering
      \begin{minipage}[b]{0.48\textwidth}
            \includegraphics[width=1\textwidth]{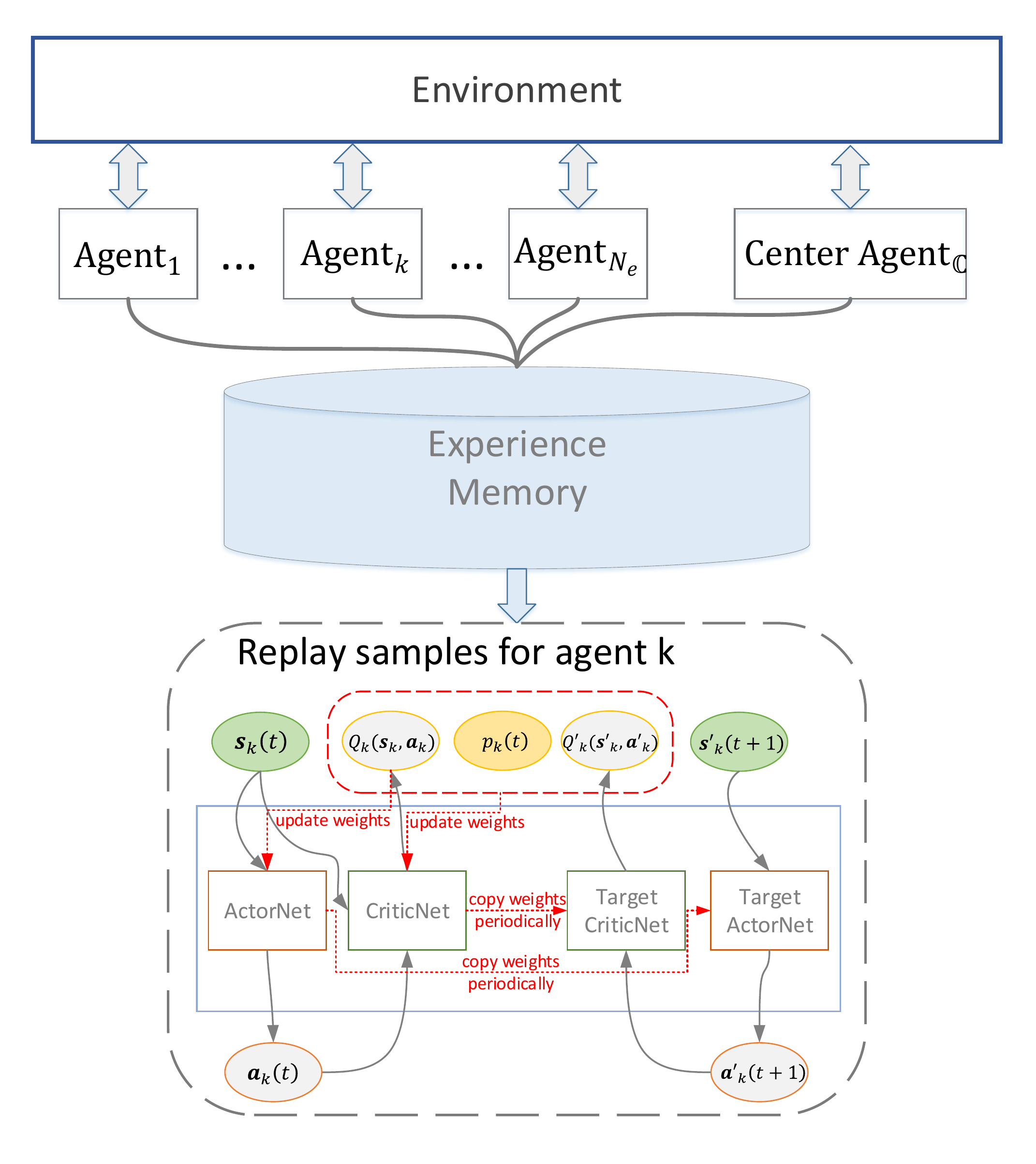}
      \end{minipage}%
      \caption{Architecture of the proposed H-MAAC based reinforcement learning collaboration framework}
      \label{learning framework}
\end{figure}
\begin{figure*}[htbp]
      \centering
      \subfigure[Edge Actor Net.]{
            \label{aanet} 
            \begin{minipage}{0.48\linewidth}
                  \includegraphics[width=1\textwidth]{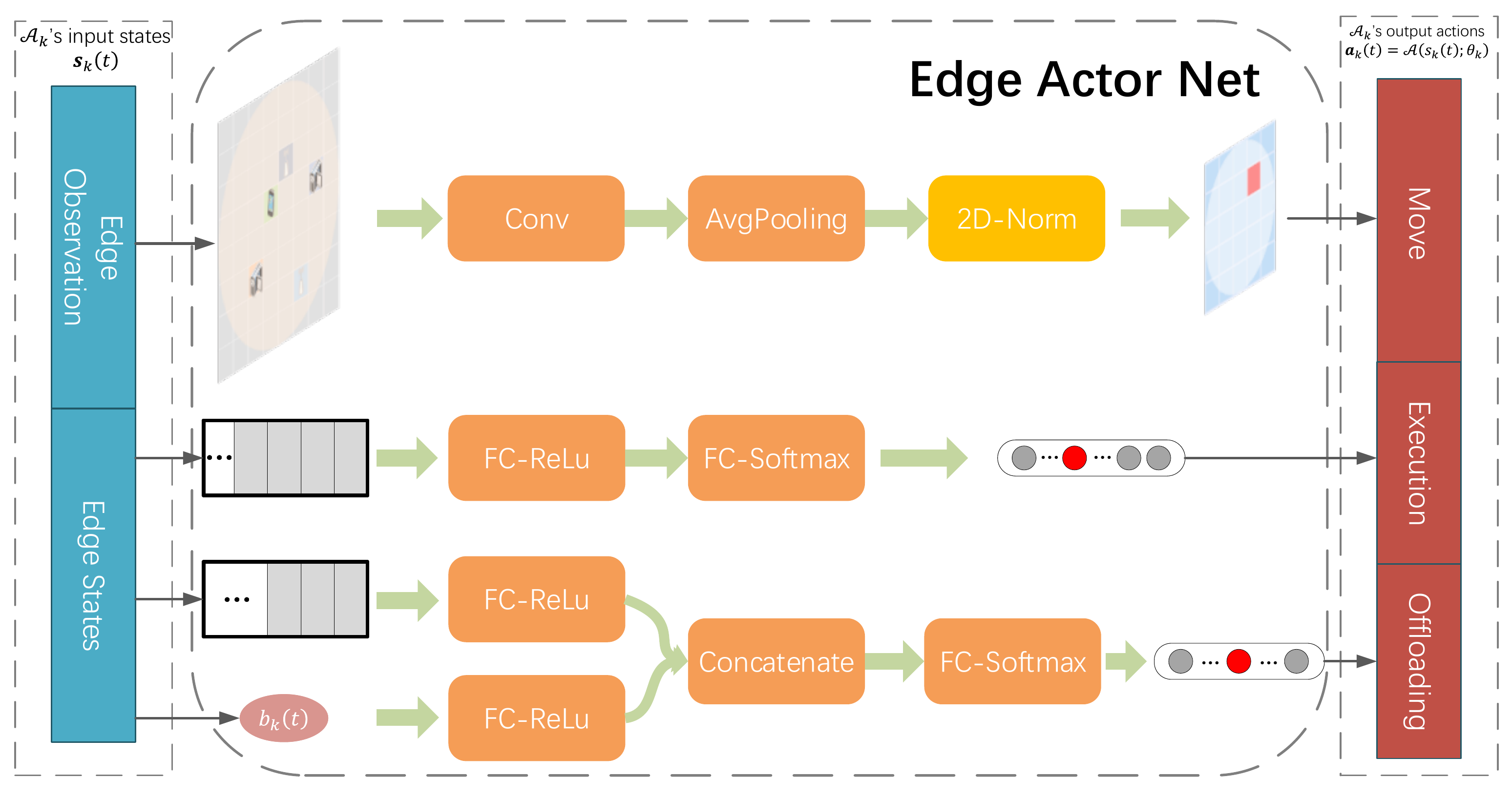}
            \end{minipage}}%
      \subfigure[Center Actor Net.]{
            \label{canet} 
            \begin{minipage}{0.48\linewidth}
                  \includegraphics[width=1\textwidth]{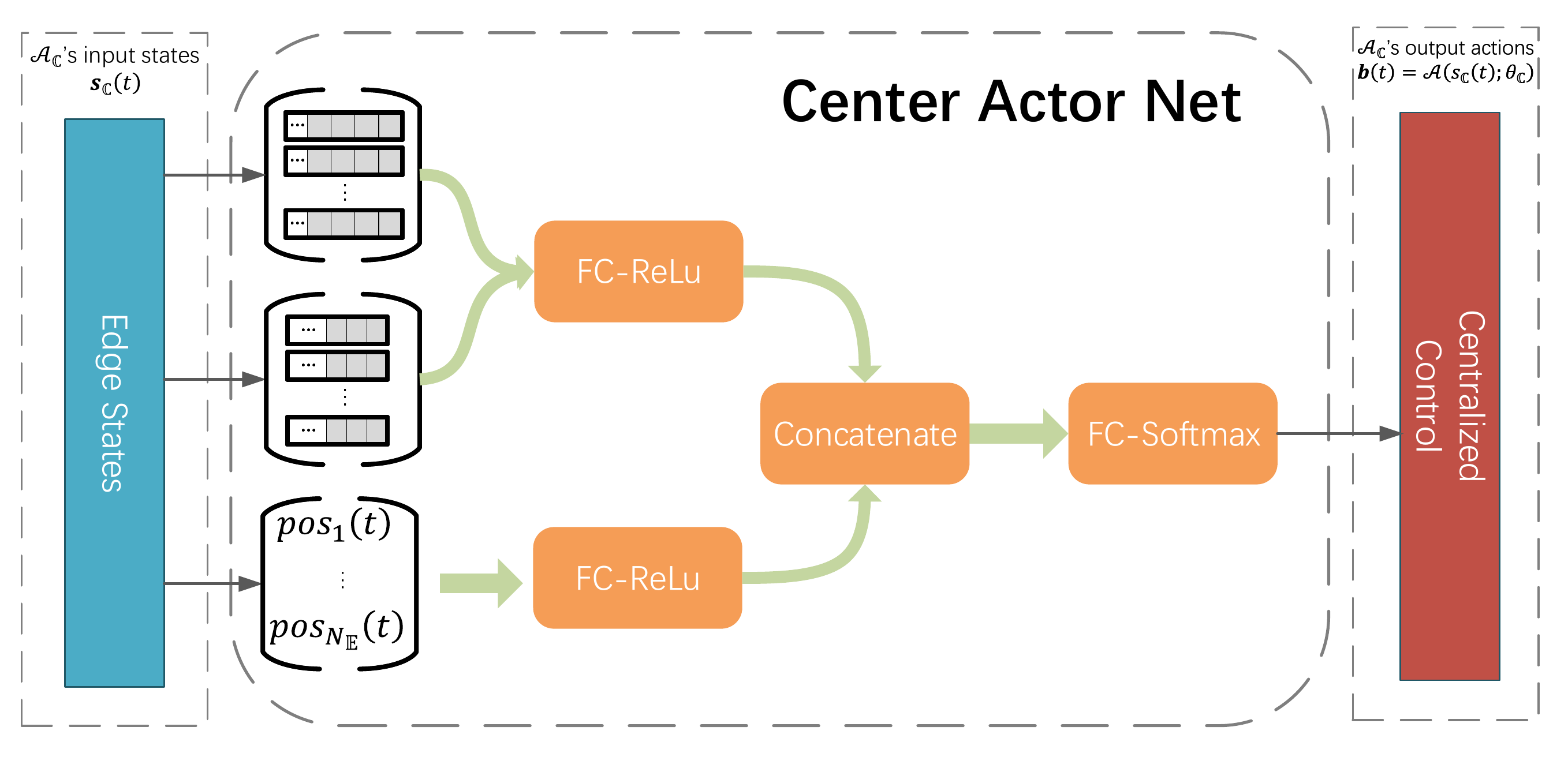}
            \end{minipage}}

      \subfigure[Edge Critic Net.]{
            \label{acnet} 
            \begin{minipage}{0.48\linewidth}
                  \includegraphics[width=1\textwidth]{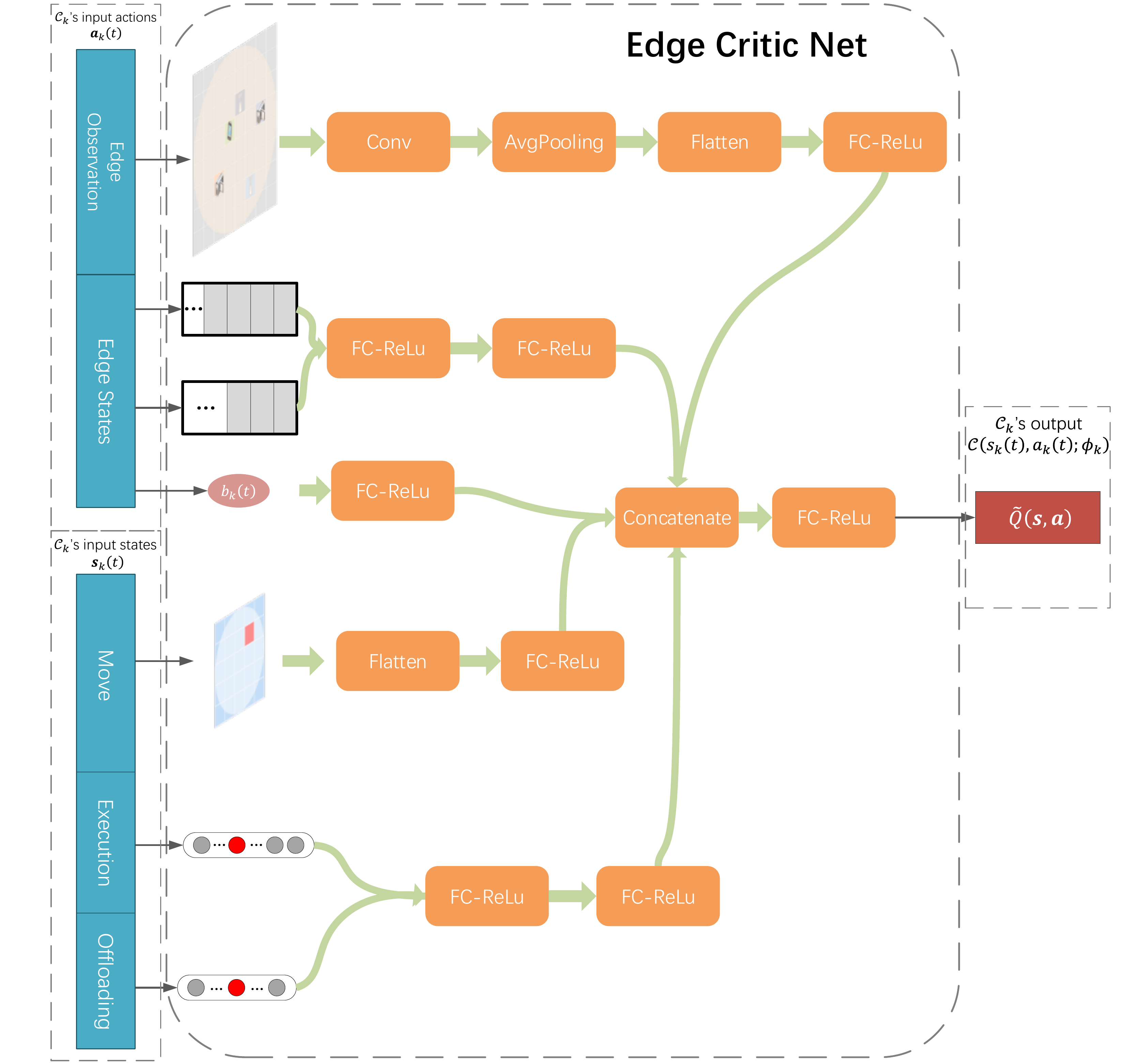}
            \end{minipage}}
      \subfigure[Center Critic Net.]{
            \label{ccnet} 
            \begin{minipage}{0.48\linewidth}
                  \includegraphics[width=1\textwidth]{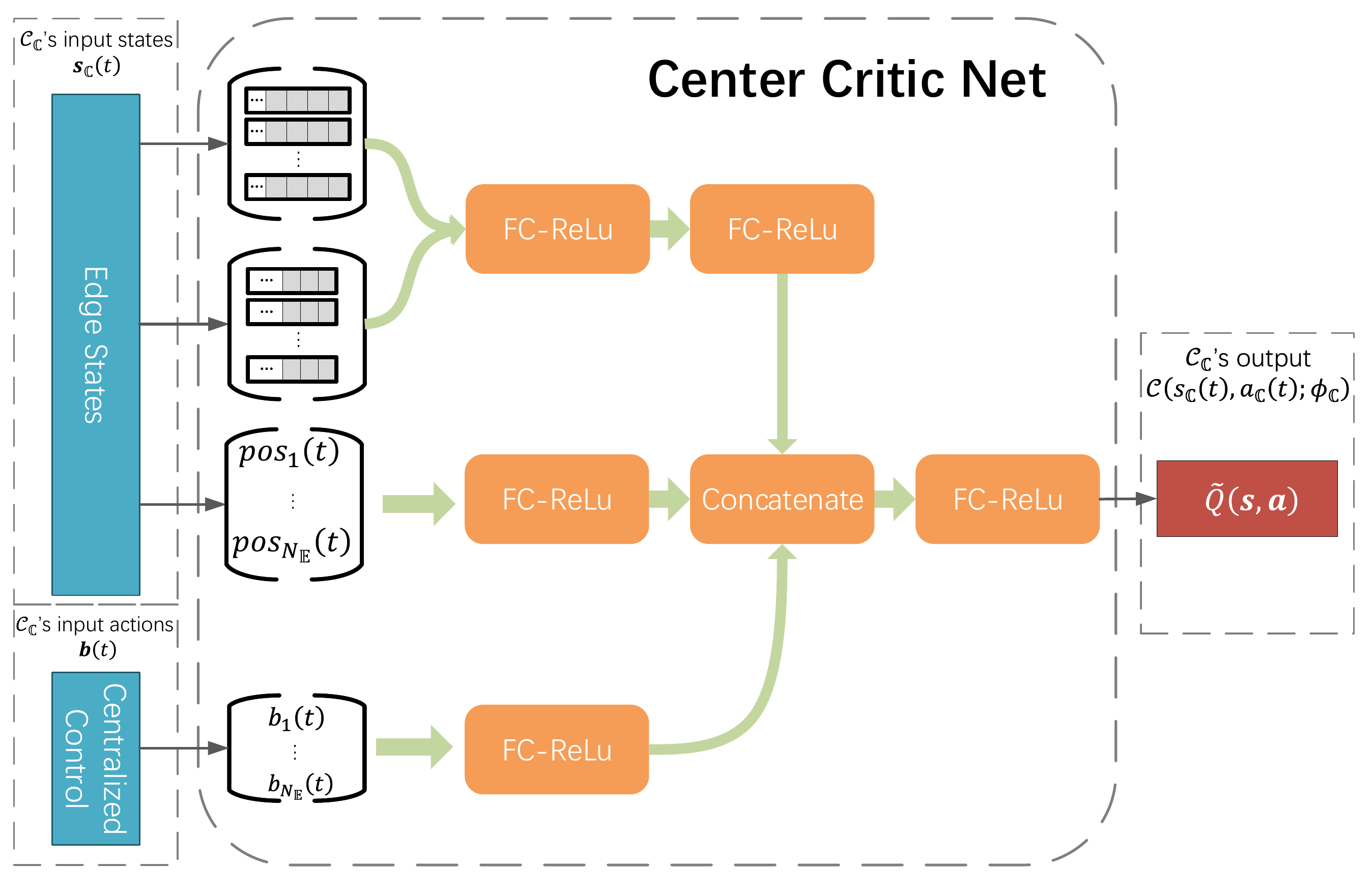}
            \end{minipage}}
      \caption{Neural network models of each actor-critic agent.}
      \label{Net Model} 
\end{figure*}
To control the edge devices and optimize the center bandwidth allocation, we present a heterogeneous multi-agent actor-critic (H-MAAC) framework. The construction of the framework and the learning procedures are shown in Fig. \ref{learning framework}, where learning agents interact with the environment, memorize the experience replay and learn the optimal actions to minimize the system penalty \cite{lowe2017multi}. The heterogeneity comes from three aspects: 1) The multi-modality of input states; 2) The multiple output actions; 3) The ensemble neural network models to learn the mixed policies. Overall, dual lightweight neural networks are built for each learning, containing original actor/critic nets and target actor/critic nets. Since the structures of the networks are relatively simple which will cost less computation resources, the framework can be deployed on each edge devices. The details of the framework are as follows.
\subsubsection{Neural network design}
Actor nets, $\mathcal{A}_k\big(\boldsymbol{s}_k(t);\boldsymbol{\theta}_k\big)$, take the states of each agent as input and output the current actions $\boldsymbol{a}_k(t)$. Since the input states and the output actions are multimodal, the network should be designed according to the specific data structures. For edge agents, we build a multiple input-output neural network for each device to learn the diversified actions through the ensemble states, including the local observation for data sources, the edge buffer states and the offloading channel states. The structure of the edge agent actor net is shown in Fig. \ref{aanet} where a convolutional neural network (CNN) with average pooling is employed to obtain the movement action while multilayer perceptrons (MLPs) are employed for edge execution as well as offloading scheduling. Specifically, we format the local observation for data sources into an $r^k_{obs}\times r^k_{obs}\times 2$ map. The observation maps can be regarded as 2-channel images where the third dimension refers to the aggregated size and delay of the data packets sensed by edge device $E_k$. Inspired by the successful applications of CNNs in computer visions, we adopt a CNN based neural network to extract the area with larger data packets as well as higher AoI. Then, through the average pooling, we project the $r^k_{obs}\times r^k_{obs}\times 2$ observation map onto the $r^k_{move}\times r^k_{move}\times 1$ movement map to decide the trajectory actions. The other inputs, buffer states and allocated bandwidth are formatted as lists and scalars. And the data scheduling vectors are outputted by an MLP net. For center actor net, its inputs consist of the state lists of the edge devices and the output is a one-sum vector for bandwidth allocation. As for the center agent actor, we use an MLP to combine the multiple edge state lists to allocate bandwidth proportion for edge-center communication, as shown in Fig. \ref{acnet}.
Additionally, on each learning agent, a critic net, $\mathcal{C}_k\big(\boldsymbol{s}_k(t),\boldsymbol{a}_k(t);\boldsymbol{\phi}_k\big)$, is also deployed to approximate the action-value function $Q\big(\boldsymbol{s}_k(t),\boldsymbol{a}_k(t)\big)$ with the current states and actions as inputs. Note that the objectives of all $\mathcal{C}_k$ are consistent in order to minimize the average age of the whole system. The critic nets are designed by the main structures of actor nets plugging the layers for action evaluation, as shown in Fig. \ref{canet} and Fig. \ref{ccnet}.
\subsubsection{Target net update}
In addition to the original actor-critic nets, the target actor-critic nets are also built. Target nets have the same structure and initialization as the original nets. While training the network parameters, target nets estimate the future actions $\boldsymbol{a}'_k(t+1)$ as well as $Q'\big(\boldsymbol{s}'_k(t+1),\boldsymbol{a}'_k(t+1)\big)$ values based on the states of next slot. The employ of target nets improves the stability and convergence of replay training \cite{arulkumaran2017brief}. The parameters of target nets are slowly updated by the original nets every $T_u$ period with the mixing weight $\tau$, i.e.,
\begin{equation}
      \label{target update}
      \begin{aligned}
            \boldsymbol{\theta}'_k=\tau\boldsymbol{\theta}'_k+(1-\tau)\boldsymbol{\theta}_k \\
            \boldsymbol{\phi}'_k=\tau\boldsymbol{\phi}'_k+(1-\tau)\boldsymbol{\phi}_k.
      \end{aligned}
\end{equation}
\subsubsection{Experience replay}
To improve the online learning efficiency, an experience replay approach is exploited here. The interactions between the entities and the environment, denoted as tuples, $\left\{\boldsymbol{s}(t),\boldsymbol{a}(t),\overline{\Delta}(t),\boldsymbol{s}'(t+1)\right\}$, are stored in an experience buffer $\boldsymbol{\mathcal{B}}$. The experience buffer is of finite capacity. When the buffer is full, new records will replace the oldest ones. As for the sampling rules, to guarantee synchronization of learning and the environment interactions, the latest $B/2$ data are always exploited for training. In each learning epoch, each agent samples interaction data with batch size $B$ from the experience buffer and updates the network parameters with the learning agents $\eta_{\mathcal{A}}$ and $\eta_{\mathcal{C}}$. Specifically, the critic nets are updated by minimizing the MSE loss function,
\begin{equation}
      \label{C loss}
      \ell_{\mathcal{C}_k}(\boldsymbol{\phi}_k):=\mathrm{E}\left[\Vert\mathcal{C}_k(\boldsymbol{s}_k,\boldsymbol{a}_k;\boldsymbol{\phi}_k)-\hat{y_k}\Vert^2\right]
\end{equation}
where
\begin{equation}
      \label{yhat}
      \hat{y}_k=\overline{\Delta}+\gamma\mathcal{C}'_k(\boldsymbol{s}'_k,\boldsymbol{a}'_k;\boldsymbol{\phi}'_k)
\end{equation}
is the estimated long-time $Q$ value. Since the goal is to minimize the penalty, the loss function of actor nets can be defined as the predicted $Q$ value,
\begin{equation}
      \label{A loss}
      \ell_{\mathcal{A}_k}({\boldsymbol\theta}_k):=\mathcal{C}_k\Big(\boldsymbol{s}_k,\mathcal{A}_k(\boldsymbol{s}_k;{\boldsymbol{\theta}}_k);\boldsymbol{\phi}_k\Big).
\end{equation}
Then, we summarize the training procedure at $t$-th epoch as Algorithm \ref{alg:replay}.
\begin{algorithm}
      \caption{Experience Replay Procedure}
      \label{alg:replay}
      \begin{algorithmic}[1]
            \FOR{each agent $k$ in $\left\{1,\cdots,N_e,\mathbb{C}\right\}$}
            \STATE Sample $\left\{\boldsymbol{s}_k,\boldsymbol{a}_k,\overline{\Delta},\boldsymbol{s}'_k\right\}$ from $\boldsymbol{\mathcal{B}}[k]$;\\
            \STATE Predict new actions: $\boldsymbol{a}'_k=\mathcal{A}'_k(\boldsymbol{s}'_k;\boldsymbol{\theta}'_k)$;\\
            \STATE Predict new $Q$-value: $Q'(\boldsymbol{s}'_k,\boldsymbol{a}'_k)=\mathcal{C}'_k(\boldsymbol{s}'_k,\boldsymbol{a}'_k;\boldsymbol{\phi}'_k)$;\\
            \STATE Calculate $\hat{y}_k$ by Eq.(\ref{yhat});\\
            \STATE Calculate $\ell_{\mathcal{C}_k}(\boldsymbol{\phi}_k)$, $\ell_{\mathcal{A}_k}(\boldsymbol{\theta}_k)$ by Eq.(\ref{C loss}), Eq.(\ref{A loss});\\
            \STATE Update network parameters using SGD optimizer:\\
            \quad\quad\quad\quad\quad $\boldsymbol{\phi}^{t+1}_k\leftarrow\boldsymbol{\phi}^t_k-\eta_{\mathcal{C}}\nabla_\phi\tilde{\ell}_{\mathcal{C}_k}(\boldsymbol{\phi}^t_k)$,\\
            \quad\quad\quad\quad\quad $\boldsymbol{\theta}^{t+1}_k\leftarrow\boldsymbol{\theta}^t_k-\eta_{\mathcal{A}}\nabla_\theta\tilde{\ell}_{\mathcal{A}_k}(\boldsymbol{\theta}^t_k)$.\\

            \ENDFOR
      \end{algorithmic}
\end{algorithm}

\subsubsection{Edge-federated mode}
Note that it is a cooperative model and the penalties of edge agents are identical. Commonly speaking, so as to reach the global optimum, the cross-communication is required in such multi-agent learning scenarios to share the knowledge of different agents. However, in the MEC system of interests, the encoding of the multimodal input states is hard to design. What's more, the transmission and the processing of the observation will cost excessive communication and computation resources. Hence, to overcome these difficulties, inspired by the concept of federated learning \cite{yang2019federated}, we propose an edge-federated mode for the above framework, where every $E_f$ learning epoch, all edge agents share their actor net parameters and carry out the federated updating. Under the proposed updating rule, each edge agent preserves the parameters with weight $\omega$ and mixes the others' parameters, which can be formulated by
\begin{equation}
      \label{ef updating}
      \boldsymbol{\theta}^{t+1}=\boldsymbol{\theta}^t\cdot\boldsymbol{\Omega}
\end{equation}
where $\boldsymbol{\theta}^t=\left[\boldsymbol{\theta}^t_1,\cdots,\boldsymbol{\theta}^t_{N_e}\right]$ denotes the vector of all edge actor nets at the $t$-th learning epoch and $\boldsymbol{\Omega}$ denotes the federated updating matrix,
\begin{equation}
      \label{Omega}
      \boldsymbol{\Omega}=
      \left[ \begin{array}{cccc}
                  \omega                 & \frac{1-\omega}{N_e-1} & \cdots & \frac{1-\omega}{N_e-1} \\
                  \frac{1-\omega}{N_e-1} & \omega                 & \cdots & \frac{1-\omega}{N_e-1} \\
                  \vdots                 & \vdots                 & \ddots & \vdots                 \\
                  \frac{1-\omega}{N_e-1} & \frac{1-\omega}{N_e-1} & \cdots & \omega
            \end{array}
            \right ]
\end{equation}
On the one hand, edge-federated provide the model-wise communication for each edge agent. Instead of sharing the input states, only the parameters of the lightweight actor nets are transmitted, which improves the communication efficiency of the system \cite{konevcny2016federated} and such communication costs can be neglected compared to the data volume in the considered MEC system. On the other hand, the edge-federated mode performs better learning convergence which will be discussed in Section \ref{section convergence}.
\subsubsection{Exploitation-exploration}
Online learning suffers from the exploitation-exploration dilemma, namely, the agents tend to repeat the previous actions which may cause trap at some position and the loss of exploration in the MEC system. To avoid such phenomenon during the online interactions with the MEC environment, we adopt an $\epsilon$-exploration approach \cite{sutton2018reinforcement} which enforces random actions with the probability $\epsilon$.

\subsection{Online Learning Algorithms for Multi-Agent Cooperation}
\begin{algorithm}[h]
      \caption{EdgeFed H-MAAC Online Collaboration}
      {\bf Initialization:} Initialize system parameters and hyper parameters for learning.\\
      {\bf Initialization:} Initialize net parameters $\boldsymbol{\theta}_k$, $\boldsymbol{\phi}_k$ and set target nets: $\boldsymbol{\theta}'_k\leftarrow\boldsymbol{\theta}_k$, $\boldsymbol{\phi}'_k\leftarrow\boldsymbol{\phi}_k$.\\
      \label{alg:all}
      \begin{algorithmic}[1]
            \FOR{epoch $t$ = 1 to MAX\_EPOCH}
            \STATE Randomly generate $q\in[0,1]$;
            \FOR{each agent $k$ in $\left\{1,\cdots,N_e,\mathbb{C}\right\}$}
            \IF{$p<\epsilon$ or $\vert\boldsymbol{\mathcal{B}}[k]\vert<B$}
            \STATE Randomly choose actions $\boldsymbol{a}_k(t)$;
            \ELSE
            \STATE Ensemble local observation and states: $\boldsymbol{s}_k(t)$;
            \STATE Set actions: $\boldsymbol{a}_k(t)=\mathcal{A}_k\left(\boldsymbol{s}_k(t);\boldsymbol{\theta}_k\right)$;
            \ENDIF
            \ENDFOR
            \STATE Interact with environment and obtain $\overline{\Delta}(t)$, $\boldsymbol{s}'(t+1)$;
            \STATE Add $\left\{\boldsymbol{s},\boldsymbol{a},\overline{\Delta},\boldsymbol{s}'\right\}$ into $\boldsymbol{\mathcal{B}}$;
            \FOR{each agent $k$ in $\left\{1,\cdots,N_e,\mathbb{C}\right\}$}
            \IF{$\vert\boldsymbol{\mathcal{B}}[k]\vert\geq B$}
            \STATE Run replay procedure, Algorithm \ref{alg:replay};
            \ENDIF
            \ENDFOR
            \IF{$t \mod T_u==1$}
            \STATE Update target nets, as Eq.(\ref{target update});
            \ENDIF
            \IF{$t \mod E_f==1$}
            \STATE Run edge-federated updating, as Eq.(\ref{ef updating});
            \ENDIF
            \ENDFOR
      \end{algorithmic}
\end{algorithm}
Based on the proposed learning framework, we develop the corresponding online multi-agent collaboration as Algorithm \ref{alg:all} where the agents learn and update the optimal strategies while the MEC system works continuously. More specifically, each epoch comprises four procedures: (1)Line 1 to line 11 is the acting procedure with $\epsilon$-exploration, where agents choose whether to act randomly or to follow the actor net strategies; (2) Line 13 to line 17 is the replay training procedure of the networks which will be skipped if the count of samples in $\boldsymbol{\mathcal{B}}$ is insufficient; (3) Line 18 to line 20 is the periodical target net updating procedure; (4) Line 21 to line 23 is the edge-federated updating procedure.

\section{Convergence Analysis}
\label{section convergence}
In this section, we shall investigate the convergence of the collaboration algorithms and show that the edge-federated learning mode has a better performance in terms of convergence rate and stability, compared to the original H-MAAC.

Firstly, consider the convergence of actor nets, the objective function is defined by the average loss of all edge agents' actor nets, i.e.,
\begin{equation}
      \label{objective l}
      \mathcal L(\overline{\boldsymbol \theta})=\frac{1}{N_e}\sum\limits_{k=1}^{N_e}\ell_{\mathcal{A}_k}(\overline{\boldsymbol\theta}_k)
\end{equation}
where $\ell_{\mathcal{A}_k}(\overline{\boldsymbol\theta}_k):=\mathcal{C}_k\Big(\boldsymbol{s}_k,\mathcal{A}_k(\boldsymbol{s}_k;\overline{\boldsymbol{\theta}}_k);\boldsymbol{\phi}_k\Big)$ is the loss of $\mathcal{A}_k$ for edge agent $E_k$ and $\overline{\boldsymbol\theta}_k$ represents the updated model parameters of $\mathcal{A}_k$ under the federated rule, Eq.(\ref{ef updating}).

Secondly, since the proposed algorithm is online and the environment is dynamic, the training may not guarantee either the global optimal or the penalty stability. Hence, we investigate the learning convergence from an alternative perspective, the gradients' 2-norm time average of all edge actor nets,
\begin{equation}
      \label{gradient 2-norm}
      \frac{1}{N_eT}\sum\limits_{t=T_0+1}^{T_0+T}\sum\limits_{k=1}^{N_e}\Big\Vert\nabla\ell_{\mathcal{A}_k}(\overline{\boldsymbol\theta}^t_k)\Big\Vert^2
\end{equation}
where $T$ is the time horizon length after $T_0$-th learning epoch. Additionally, we denote the training interaction sets of $E_k$ from epoch $T_0+1$ to $T_0+T$ as
\begin{equation}
      \label{training set}
      \{\boldsymbol{\mathcal{T}}_k\}^{T_0}:=\Big\{\boldsymbol{\mathcal{T}}_k(T_0+1),\cdots,\boldsymbol{\mathcal{T}}_k(T_0+T)\Big\}
\end{equation}
where $\boldsymbol{\mathcal{T}}_k(t)=\left\{\boldsymbol{s}_k,\boldsymbol{a}_k,\overline{\Delta},\boldsymbol{s}'_k\right\}^t$ represents the interaction records used to train $E_k$'s neural networks at $t$-th learning epoch.

\subsection{Assumptions}
Before the analysis, let us illustrate some assumptions similar to \cite{wang2018cooperative}, under which the convergence theorem can be carried out.
\subsubsection{\textbf{Existence of optimal loss}}
While the environment state sets are given, we assume that the optimal net parameters, $\boldsymbol{\theta}^*_k$, and the corresponding actor loss, $\ell^*_{\boldsymbol{A}_k}$, of each edge agent exist. This assumption is intuitive because the action space is finite and the loss is related to system penalty.
\subsubsection{\textbf{Independence of center agent learning}}
Since we consider the federated learning mode of edge agents, to eliminate the influence of center agent, we assume that the training process of center agent is independent with the edge agents'. Consequently, the learning of center actor-critic parameters, $\boldsymbol{\theta}_\mathbb{C}$ and $\boldsymbol{\phi}_\mathbb{C}$ can always reach the optimal parameters and be out of account while discussing the edge learning convergence.
\subsubsection{\textbf{Fine-fitness of critic nets}}
Note that each critic net is trained to predict the common penalty value $Q$ related to the system average age, $\overline{\boldsymbol{\Delta}}(t)$. Owing to the large learning capacity of neural networks, we assume that after $T_0$-th epoch, for given training set $\{\boldsymbol{\mathcal{T}}_k\}^{T_0}$, critic nets converge and fit the $Q$ values well, which refers to that the critic nets' parameters $\boldsymbol{\phi}_k$ keep the consistent values of $\boldsymbol{\phi}^{T_0*}_k$ and the loss $\ell_{\mathcal{A}_k}(\boldsymbol\theta_k)=\mathcal{C}_k\Big(\boldsymbol{s}_k,\mathcal{A}_k(\boldsymbol{s}_k;\boldsymbol{\theta}_k);\boldsymbol{\phi}_k^*\Big)$ can be regarded as a determined function with fixed $\boldsymbol{\phi}_k^*$.
\subsubsection{\textbf{Conditional L-smoothness for given environment states}}
In general, neural networks are neither smooth nor convex. However, for given training set $\{\boldsymbol{\mathcal{T}}_k\}^{T_0}$, the Lipschitz constant of a model consisting of MLP or CNN can be estimated \cite{balan2017lipschitz}. Besides, the activation functions used in proposed net such as ReLU and Softmax are Lipschitz continuous and differentiable \cite{latorre2020lipschitz}. Therefore, we can denote the Lipschitz constant of each actor net as $L_{\boldsymbol{\mathcal{T}}_k}$ which is related to its inputs $\{\boldsymbol{\mathcal{T}}_k\}^{T_0}$ from $T_0+1$-th to $T_0+T$-th training epoch. Then, the conditional L-smoothness of each actor net's loss function can be expressed as
\begin{equation}
      \label{l-smooth}
      \Vert\nabla\ell_{\mathcal{A}_k}(\boldsymbol\theta_k)-\nabla\ell_{\mathcal{A}_k}(\boldsymbol\theta'_k)\Vert\leq L_{\boldsymbol{\mathcal{T}}_k}\Vert\boldsymbol{\theta}_k-\boldsymbol{\theta}'_k\Vert.
\end{equation}
\subsubsection{\textbf{Unbiased bounded SGD}}
In each epoch, each agent samples a mini batch with size $B$ from the experience memory. Consider an SGD (stochastic gradient descent) based optimizer is applied for back propagation, the unbiased stochastic gradient $\tilde{\boldsymbol{g}}_k$'s variance is bounded by
\begin{equation}
      \label{sgd bound}
      \mathrm{E}\big[\Vert\tilde{\boldsymbol g}_k-\boldsymbol g_k\Vert^2\big]\leq C\Vert \boldsymbol g_k\Vert^2+\frac{\sigma_{\boldsymbol{\mathcal{T}}_k}^2}{B}
\end{equation}
where $\boldsymbol g_k=\mathrm{E}\big[\tilde{\boldsymbol g}_k\big]=\nabla\ell_{\mathcal{A}_k}(\boldsymbol\theta_k)$ is the average stochastic gradient for given training input states $\{\boldsymbol{\mathcal{T}}_k\}^{T_0}$ and $\sigma_{\boldsymbol{\mathcal{T}}_k}$ is the related variance constant. Moreover, $C$ is the non-negative constant for all edge agents.

Briefly, the assumptions 1) to 3) are reasonable to the MEC environment while 4) and 5) are assumptions for the learning progress.
\subsection{Convergence Theorem for EdgeFed H-MAAC}
Under the assumptions above, the learning convergence can be conducted and the results can be summarized as the following theorem.
\begin{theorem}
      \label{convergence thm}
      For the proposed EdgeFed H-MAAC algorithm with the update period $E_f$, and under the assumptions 1) to 5), if the learning rates of all edge agents are set to be $\eta$ which satisfies
      \begin{equation}
            \begin{aligned}
                  \Bigg[\frac{2CE_fL_{max}^2}{1-\zeta^2} & +\frac{E_f^2L_{max}^2}{1-\zeta}\left(\frac{2\zeta}{1+\zeta}+\frac{2\zeta}{1-\zeta}+\frac{E_f-1}{E_f}\right)\Bigg]\eta^2 \\
                                                         & +L_{max}(C+1)\eta-1\leq 0
            \end{aligned}
            \label{lr}
      \end{equation}
      where $\zeta=\frac{N_e\omega-1}{N_e-1}$ is the second maximum eigenvalue of the updating matrix $\boldsymbol{\Omega}$ and $L_{max}=\max\limits_k\ L_{\boldsymbol{s}_k}$. Then, the time average squared gradient norm after $T_0$-th epoch is bounded by
      \begin{equation}
            \begin{aligned}
                  \mathrm{E} \Bigg[\frac{1}{N_eT} & \sum\limits_{t=T_0+1}^{T_0+T}\sum\limits_{k=1}^{N_e}\big\Vert\nabla\ell_{\mathcal{A}_k}(\overline{\boldsymbol{\theta}}^t_k)\big\Vert^2\Bigg]\leq \frac{2\sum\limits_{k=1}^{N_e}\left[\ell_{\mathcal{A}_k}(\overline{\boldsymbol{\theta}}^{T_0}_k)-\ell_{\mathcal{A}_k}^*\right]}{\eta N_eT} \\
                                                  & +\frac{\eta}{N_eB}\sum\limits_{k=1}^{N_e}L_{\boldsymbol{\mathcal{T}}_k}\sigma_{\boldsymbol{\mathcal{T}}_k}^2+\frac{\eta^2\sigma_{max}^2L_{max}^2}{2B}\Big(\frac{1+\zeta^2}{1-\zeta^2}E_f-1\Big).
            \end{aligned}
            \label{thm}
      \end{equation}
\end{theorem}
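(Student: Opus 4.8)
The plan is to treat the EdgeFed H-MAAC actor updates as a periodic-averaging (cooperative) stochastic gradient descent in the spirit of \cite{wang2018cooperative}, and to run an $L$-smoothness descent argument along the trajectory of the \emph{consensus} parameter. First I would introduce the averaged iterate $\overline{\boldsymbol{\theta}}^t:=\frac{1}{N_e}\sum_{k=1}^{N_e}\boldsymbol{\theta}^t_k$ and observe that, because $\boldsymbol{\Omega}$ in Eq.(\ref{Omega}) is symmetric and doubly stochastic, the federated mixing step Eq.(\ref{ef updating}) leaves this average unchanged; hence across all epochs the consensus model evolves simply as $\overline{\boldsymbol{\theta}}^{t+1}=\overline{\boldsymbol{\theta}}^t-\frac{\eta}{N_e}\sum_k\tilde{\boldsymbol{g}}^t_k$. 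Under the fine-fitness assumption (fixed critic $\boldsymbol{\phi}^*_k$) each loss $\ell_{\mathcal{A}_k}$ is a deterministic function, and under the center-independence assumption the center agent drops out, so the analysis reduces to $N_e$ coupled actor updates whose per-step stochastic gradient obeys the unbiased variance bound Eq.(\ref{sgd bound}).

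Next I would apply the conditional $L$-smoothness Eq.(\ref{l-smooth}) with $L_{max}=\max_k L_{\boldsymbol{\mathcal{T}}_k}$ to the aggregate loss $\mathcal{L}$ evaluated at $\overline{\boldsymbol{\theta}}^t$, take the conditional expectation over the mini-batch, and use $\mathrm{E}[\tilde{\boldsymbol{g}}_k]=\boldsymbol{g}_k=\nabla\ell_{\mathcal{A}_k}$ together with Eq.(\ref{sgd bound}). Because the per-agent gradients are read at $\boldsymbol{\theta}^t_k$ rather than at the consensus point, smoothness is invoked a second time to trade the gap for the deviation-from-mean, producing a per-epoch inequality of the form
\begin{equation}
\begin{aligned}
\mathrm{E}\big[\mathcal{L}(\overline{\boldsymbol{\theta}}^{t+1})\big]\leq{}& \mathrm{E}\big[\mathcal{L}(\overline{\boldsymbol{\theta}}^{t})\big]-\frac{\eta}{2}\,\mathrm{E}\big\|\nabla\mathcal{L}(\overline{\boldsymbol{\theta}}^{t})\big\|^2\\
&+\frac{L_{max}^2\eta}{2N_e}\sum_k\mathrm{E}\big\|\boldsymbol{\theta}^t_k-\overline{\boldsymbol{\theta}}^t\big\|^2+(\text{noise floor}),
\end{aligned}
\end{equation}
where the coefficient of the negative gradient term also absorbs the $L_{max}(C+1)\eta$ contribution coming from the $C\|\boldsymbol{g}_k\|^2$ part of Eq.(\ref{sgd bound}), and the middle term is the network \emph{consensus discrepancy}.

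The crux — and the step I expect to be hardest — is bounding the accumulated discrepancy $\frac{1}{N_e}\sum_k\mathrm{E}\|\boldsymbol{\theta}^t_k-\overline{\boldsymbol{\theta}}^t\|^2$. Here the spectral structure of $\boldsymbol{\Omega}$ enters: all eigenvalues except the leading one equal $\zeta=\frac{N_e\omega-1}{N_e-1}$ (the eigenvalue on the subspace orthogonal to the all-ones direction, which is exactly where the deviation-from-mean lives), so each mixing round contracts the discrepancy by $\zeta$, while between rounds the (at most $E_f-1$) plain SGD steps inject fresh deviations driven by the gradient norms and the variance $\sigma^2_{\boldsymbol{\mathcal{T}}_k}/B$. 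Writing the deviation recursion in the eigenbasis of $\boldsymbol{\Omega}$ and unrolling it over a window of $E_f$ epochs yields geometric sums in $\zeta$ whose closed forms are precisely the factors $\frac{2\zeta}{1+\zeta}$, $\frac{2\zeta}{1-\zeta}$ and $\frac{E_f-1}{E_f}$ appearing in Eq.(\ref{lr}), plus a noise term scaling like $\eta^2\frac{1+\zeta^2}{1-\zeta^2}E_f$ as in the last term of Eq.(\ref{thm}). Care is needed to retain the cross terms between the gradient-driven and noise-driven deviations and to use $1-\zeta^2=(1-\zeta)(1+\zeta)$ to collapse the series cleanly.

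Finally I would sum the per-epoch inequality over $t=T_0+1,\dots,T_0+T$, telescoping into $\mathcal{L}(\overline{\boldsymbol{\theta}}^{T_0})-\mathrm{E}[\mathcal{L}(\overline{\boldsymbol{\theta}}^{T_0+T})]$ and lower-bounding the final term by the optimal losses $\ell^*_{\mathcal{A}_k}$ from the existence assumption, then substitute the discrepancy bound back in. Collecting the coefficient multiplying the time-averaged squared gradient norm, the requirement that this coefficient stay nonnegative is exactly the quadratic-in-$\eta$ condition Eq.(\ref{lr}); under it, dividing through by $\eta N_e T/2$ leaves the optimization-gap term $\frac{2\sum_k[\ell_{\mathcal{A}_k}(\overline{\boldsymbol{\theta}}^{T_0}_k)-\ell^*_{\mathcal{A}_k}]}{\eta N_e T}$ together with the two noise-floor terms, which is precisely Eq.(\ref{thm}).
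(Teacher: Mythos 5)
Your proposal follows essentially the same route as the paper's proof: a periodic-averaging SGD analysis in the style of \cite{wang2018cooperative}, combining the conditional $L$-smoothness descent step, the unbiased bounded-variance gradient assumption Eq.(\ref{sgd bound}), a consensus-discrepancy bound driven by the second eigenvalue $\zeta$ of $\boldsymbol{\Omega}$ unrolled over $E_f$-length windows (the paper imports this directly as Eq.(136) of \cite{wang2018cooperative}), telescoping, and the learning-rate condition Eq.(\ref{lr}) to cancel the $\Vert\boldsymbol{g}^t_k\Vert^2$ terms. The one substantive deviation is that you apply the descent lemma to the aggregate loss at a single consensus iterate and therefore end up bounding $\Vert\nabla\mathcal{L}(\overline{\boldsymbol{\theta}}^t)\Vert^2$, which by Jensen's inequality is smaller than the theorem's target $\frac{1}{N_e}\sum_{k}\Vert\nabla\ell_{\mathcal{A}_k}(\overline{\boldsymbol{\theta}}^t_k)\Vert^2$; the paper instead performs the smoothness expansion and the inner-product identity per agent at $\overline{\boldsymbol{\theta}}^t_k$ before averaging, which is what produces the per-agent squared-gradient sum on the left-hand side of Eq.(\ref{thm}), so you would need to restructure that step accordingly.
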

\begin{proof}
      \label{proof}
      We present the proof of the theorem in the way similar to Appendix D of \cite{haddadpour2019convergence}. Consider the difference of the average loss,
      \begin{equation}
            \mathcal{L}(\overline{\boldsymbol{\theta}}^{t+1})-\mathcal{L}(\overline{\boldsymbol{\theta}}^{t})=\frac{1}{N_e}\sum\limits_{k=1}^{N_e}\Big[\ell_{\mathcal{A}_k}(\overline{\boldsymbol\theta}_k^{t+1})-\ell_{\mathcal{A}_k}(\overline{\boldsymbol\theta}_k^t)\Big].
            \label{def minus}
      \end{equation}
      As an application of the L-smoothness gradient assumption, we obtain
      \begin{equation}
            \begin{aligned}
                  \ell_{\mathcal{A}_k}(\overline{\boldsymbol\theta}_k^{t+1})-\ell_{\mathcal{A}_k}(\overline{\boldsymbol\theta}_k^t) \leq \frac{\eta^2L_{\boldsymbol{\mathcal{T}}_k}}{2}\big\Vert\tilde{\boldsymbol{g}}^t_k\big\Vert^2-\eta\left<\nabla\ell_{\mathcal{A}_k}(\overline{\boldsymbol\theta}_k^t),\tilde{\boldsymbol g}_k^t\right>.
            \end{aligned}
            \label{part1}
      \end{equation}
      By assumption 5), the expected value of the second term can be bounded by
      \begin{equation}
            \begin{aligned}
                  \mathrm{E}\Big[\Vert\tilde{\boldsymbol{g}}^t_k\Vert^2\Big] & =\mathrm{E}\Big[\Vert\tilde{\boldsymbol{g}}^t_k-\boldsymbol{g}^t_k\Vert^2\Big]+\Vert\boldsymbol{g}^t_k\Vert^2 \\
                                                                             & \leq(C+1)\Vert\boldsymbol{g}^t_k\Vert+\frac{\sigma_{\boldsymbol{\mathcal{T}}_k}^2}{B}.
            \end{aligned}
            \label{g bound}
      \end{equation}
      For the mean of the first term, by Eq.(\ref{l-smooth}) in assumption 4), we have
      \begin{equation}
            \begin{aligned}
                  \mathrm{E} & \Big[-\eta\left<\nabla\ell_{\mathcal{A}_k}(\overline{\boldsymbol\theta}_k^t),\tilde{\boldsymbol g}_k^t\right>\Big] =-\eta\left<\nabla\ell_{\mathcal{A}_k}(\overline{\boldsymbol\theta}_k^t),\mathrm{E}\left[\tilde{\boldsymbol g}_k^t\right]\right> \\
                             & =-\eta\left<\nabla\ell_{\mathcal{A}_k}(\overline{\boldsymbol\theta}_k^t),\boldsymbol g_k^t\right>                                                                                                                                                   \\
                             & =\frac{-\eta}{2}\left[\Vert\nabla\ell_{\mathcal{A}_k}(\overline{\boldsymbol\theta}_k^t)\Vert^2+\Vert\boldsymbol g_k^t\Vert^2-\Vert\nabla\ell_{\mathcal{A}_k}(\overline{\boldsymbol\theta}_k^t)-\boldsymbol g_k^t\Vert\right]                        \\
                             & \leq\frac{\eta}{2}\left[-\Vert\nabla\ell_{\mathcal{A}_k}(\overline{\boldsymbol\theta}_k^t)\Vert^2-\Vert\boldsymbol g_k^t\Vert^2+L_{\boldsymbol{\mathcal{T}}_k}^2\Vert\overline{\boldsymbol{\theta}}^t_k-\boldsymbol{\theta}^t_k\Vert^2\right]
            \end{aligned}
            \label{product}
      \end{equation}
      According to Eq.(136) of Appendix D.2.4 in \cite{wang2018cooperative}, we get the average bound for the last term in Eq.(\ref{product}) as
      \begin{equation}
            \begin{aligned}
                  \frac{1}{N_eT}\sum\limits_{t,k} & L^2_{\boldsymbol{\mathcal{T}}_k}\Vert\overline{\boldsymbol{\theta}}^t_k-\boldsymbol{\theta}^t_k\Vert^2 \leq\frac{\eta^2\sigma_{max}^2L^2_{max}}{B}\left(\frac{1+\zeta^2}{1-\zeta^2}E_f-1\right) \\
                                                  & +\Bigg[\frac{\eta^2E_f^2L^2_{max}}{1-\zeta}\left(\frac{2\zeta}{1+\zeta}+\frac{2\zeta}{1-\zeta}+\frac{E_f-1}{E_f}\right)                                                                         \\
                                                  & +\frac{2\eta^2CE_fL^2_{max}}{1-\zeta^2}\Bigg]\frac{1}{N_eT}\sum\limits_t\sum\limits_k \Vert\boldsymbol g_k^t\Vert^2
            \end{aligned}
            \label{part2}
      \end{equation}
      Then, by taking the average on both sides of Eq.(\ref{def minus}) and combining Eq.(\ref{part1}) to Eq.(\ref{part2}), we obtain
      \begin{equation}
            \begin{aligned}
                  \frac{1}{T} & \sum\limits_{t=T_0+1}^{T_0+T}\mathrm{E}\left[\mathcal{L}(\overline{\boldsymbol{\theta}}^{t+1})-\mathcal{L}(\overline{\boldsymbol{\theta}}^{t})\right]\leq\frac{1}{N_eT}\sum\limits_t\sum\limits_k\frac{\eta^2L_{\boldsymbol{\mathcal{T}}_k}}{2}\big\Vert\tilde{\boldsymbol{g}}^t_k\big\Vert^2 \\
                              & \quad\quad\quad +\frac{1}{N_eT}\sum\limits_t\sum\limits_k-\eta\left<\nabla\ell_{\mathcal{A}_k}(\overline{\boldsymbol\theta}_k^t),\tilde{\boldsymbol g}_k^t\right>                                                                                                                             \\
                              & \leq\frac{-\eta}{2N_eT}\sum\limits_t\sum\limits_k\Vert\nabla\ell_{\mathcal{A}_k}(\overline{\boldsymbol\theta}_k^t)\Vert^2+\boldsymbol{\Gamma}\cdot\frac{\eta}{2N_eT}\sum\limits_t\sum\limits_k\Vert\boldsymbol{g}^t_k\Vert^2                                                                  \\
                              & +\frac{\eta^3\sigma_{max}^2L_{max}^2}{2B}\left(\frac{1+\zeta^2}{1-\zeta^2}E_f-1\right)+\frac{\eta^2}{2N_eB}\sum\limits_kL_{\boldsymbol{\mathcal{T}}_k}\sigma_{\boldsymbol{\mathcal{T}}_k}^2
            \end{aligned}
      \end{equation}
      where $\Gamma$ is the left-hand side of Eq.(\ref{lr}),
      \begin{equation}
            \begin{aligned}
                  \Gamma=\Bigg[\frac{2CE_fL_{max}^2}{1-\zeta^2} & +\frac{E_f^2L_{max}^2}{1-\zeta}\left(\frac{4\zeta}{1-\zeta^2}+\frac{E_f-1}{E_f}\right)\Bigg]\eta^2 \\
                                                                & +L_{max}(C+1)\eta-1
            \end{aligned}
      \end{equation}
      Particularly, if the learning rate $\eta$ is set properly to make $\Gamma\leq 0$, the terms related with $\Vert\boldsymbol{g}^t_k\Vert^2$ can be eliminated and the conclusion Eq.(\ref{thm}) can be obtained.
\end{proof}
Further, we summarize the following remarks to interpret some insights observed from \textbf{Theorem \ref{convergence thm}}.
\begin{remark}
      \label{gradient convergence}
      \textbf{Convergence of gradients.} \rm The theorem investigates the learning convergence from the perspective of gradients. When the average 2-norm of $\ell_{\mathcal{A}_k}(\overline{\boldsymbol\theta}_k)$'s gradients are upper bounded, one can deem that the loss functions are stable and the learning process converges.
\end{remark}
\begin{remark}
      \label{generalization}
      \textbf{Generalization to other metrics.} \rm Note that in the above theorem, we consider the actor loss without regard to the specific metric and the goal is to bound the deviation of the gradients. Therefore, if one hopes to maintain any metric (or metric combination) which converges to certain value, the theorem always holds.
\end{remark}
\begin{remark}
      \label{interpretation}
      \textbf{Interpretation of the bounds.} \rm In the theorem, the right-hand side upper bound $\Lambda\left(\omega,E_f,\boldsymbol{\mathcal{T}}_k^{T_0}\right)$ can be interpreted by the following three separated terms:
      \begin{equation}
            \label{interpret thm}
            \begin{aligned}
                  \Lambda\left(\omega,E_f,\boldsymbol{\mathcal{T}}_k^{T_0}\right) & =\underbrace{\frac{2\sum\limits_{k=1}^{N_e}\left[\ell_{\mathcal{A}_k}(\overline{\boldsymbol{\theta}}^{T_0}_k)-\ell_{\mathcal{A}_k}^*\right]}{\eta N_eT}}_{\rm Initial\ Deviation}+\underbrace{\frac{\eta\sum\limits_{k=1}^{N_e}L_{\boldsymbol{\mathcal{T}}_k}\sigma_{\boldsymbol{\mathcal{T}}_k}^2}{N_eB}}_{\rm Sequel\ Deviation} \\
                                                                                  & +\underbrace{\frac{\eta^2\sigma_{max}^2L_{max}^2}{2B}\Big(\frac{1+\zeta^2}{1-\zeta^2}E_f-1\Big)}_{\rm Federated\ Updating\ Deviation}.
            \end{aligned}
      \end{equation}
      The first term is the initial deviation caused by the training before $T_0$. The second term, related to the interaction after $T_0$, is called the sequel deviation. The third term is the gradients' deviation when the edge federated updating mode is carried.
\end{remark}
\begin{remark}
      \label{minimum bound}
      \textbf{Minimum bounds.} \rm For the federated update matrix in Eq.(\ref{Omega}), $\boldsymbol{\Omega}$ is positive-definite and symmetrical which has a 1-order eigenvalue $\Lambda_1=1$ and $N_e-1$ order repeated eigenvalue, $\Lambda_2=\cdots=\Lambda_{N_e-1}=\frac{N_e\omega-1}{N_e-1}$. Then, $\zeta=\frac{N_e\omega-1}{N_e-1}\in\left[\frac{-1}{N_e-1},1\right]$. Besides, one can rewrite the last term of the right-hand side in Eq.(\ref{thm}) as:
      \begin{equation}
            \label{mini bound}
            \begin{aligned}
                   & \frac{\eta^2\sigma_{max}^2L_{max}^2}{2B}\Big(\frac{1+\zeta^2}{1-\zeta^2}E_f-1\Big) \\
                   & =\frac{\eta^2\sigma_{max}^2L_{max}^2}{2B}\Big(\frac{2}{1-\zeta^2}E_f-E_f-1\Big).
            \end{aligned}
      \end{equation}
      Thus, with other hyper parameters fixed, the upper bound of the gradients' 2-norm in Eq.(\ref{thm}) meets its minimum if $\zeta^*=0$, i.e., $\omega^*=\frac{1}{N_e}$. This implies that when the edge-federated updating is carried out with uniform weights, namely, all elements in $\Omega$ are equal to $\frac{1}{N_e}$, the gradients’  mean-square time average is bounded tightly, which leads to best training convergence.
\end{remark}
\begin{remark}
      \label{omega=1}
      \textbf{Impact of $\omega$.} \rm Intuitively, if one sets $\omega<\frac{1}{N_e}$, the diagonal elements of $\Omega$ in Eq.(\ref{Omega}) are smaller than the others. We regard these cases as chaos because each agent keeps less knowledge of the policies learned from its own observations. Also, in particular, when $\omega=1$, $\Omega$ is an identity matrix which refers to the original mixed H-MAAC, where each agent learns the policies individually and no parameter sharing occurs. In this case, $\zeta=1$ and the right-hand side of Eq.(\ref{thm}) becomes infinity, which means that the gradients are unbounded, and therefore, the convergence of original H-MAAC is not guaranteed under the above deductions. Thus, the effective interval of $\omega$ lies on $[\frac{1}{N_e},1)$. Overall, we elucidate the impact of $\omega$ as the following Fig. \ref{explain omega}. When $\omega\rightarrow\frac{1}{N_e}$, the federated parameter sharing becomes uniform and the edge agents tend to be homogeneous. When $\omega\rightarrow 1$, the diagonal elements of $\Omega$ dominate and each edge agents keeps more individuality by preserving the most of its own policies.
      \begin{figure}[htbp]
            \centering
            \begin{minipage}{0.3\textwidth}
                  \includegraphics[width=1\textwidth]{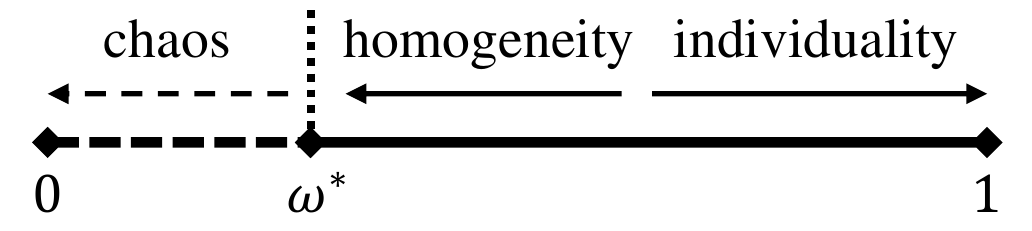}
            \end{minipage}%
            \caption{A sketch for the impact of $\omega$.}
            \label{explain omega}
      \end{figure}
\end{remark}
\begin{remark}
      \label{efficiency convergence}
      \textbf{Convergence vs. efficiency.} \rm Note that in assumption 3) to 5), we require the preconditions that the interaction sets are given and fixed. Nevertheless, in the online MEC collaboration, the environment is stochastic where the dynamics of data arrival and $\epsilon$-exploration are time variant. In addition, the random sampling in experience replay may also lead to variant training sets, $\{\boldsymbol{\mathcal{T}}_k\}^{t}$. For reasons above, the optimal critic parameters $\boldsymbol{\phi}_k^*$, the Lipschitz constant $L_{\boldsymbol{\mathcal{T}}_k}$ in Eq.(\ref{l-smooth}), the gradient variance $\sigma_{\boldsymbol{\mathcal{T}}_k}$ as well as the coefficient $C$ in Eq.(\ref{sgd bound}) are likely to be unstable and of large variance at different learning epochs.
      Furthermore, to the contrary, the online collaboration system might benefit from the fluctuation of the gradients because the actor parameters $\boldsymbol{\theta}$ can be modified to adapt the change of the environment. Thus, the best training convergence in the ideal scenarios may not always guarantee the best system performance. We reckon this as a learning convergence vs. system efficiency trade-off in practice and the details will be discussed in Section \ref{section sim}.
\end{remark}

\section{Performance Evaluation}
\label{section sim}
In this section, we will test the proposed MEC collaboration framework and evaluate its performance through simulation. Besides, the comparisons with popular RL approaches and other insights are investigated.
\subsection{Simulation Settings}
Firstly, we implement the age sensitive MEC system presented in Section \ref{section system} as a universal \textit{gym} \cite{brockman2016openai} module in Python. For simplicity of the following discussion and simulation, we suppose that the packets independently arrive at each data source $S_n$ with the probability $p^n_g$ at the beginning of each time slot and the data size follows a Poisson distributed with the arrival rate, $\lambda_n$ \cite{fan2018application}. Thus, the data generation process can be described as a switch Poisson process:
\begin{equation}
    \label{data gen}
    P\left\{d_{n,i}(t)=m\right\}=\mathbbm{1}_{G_n}\cdot\frac{e^{-\lambda_n}\lambda_n^m}{m!}
\end{equation}
where the boolean random variable $G_n\sim{\rm Bernoulli}(p^n_g)$, indicates the arrival of the packets.

\begin{table*}[htbp]
    \caption{\upshape Main parameter settings for simulations.}
    \label{sim_settings}
    \centering
    \begin{tabular}[c]{ccc}
        \hline
        \rowcolor{gray!20}Parameter                  & Description                                                    & Value                               \\
        \hline
        $(\lambda_n,p_g^n)$                          & Data rate and arrival probability of data generation at $S_n$. & (1Kb/slot, 0.3)                     \\
        $(r^k_{move}$, $r^k_{obs}$, $r^k_{collect})$ & Edge devices' radius of movement, observation, collection.     & (6, 60, 40)                         \\
        $B_{col}^k$, $B_{exe}^k$                     & Maximum buffer size for collected data and executed data.      & 5                                   \\
        $f_c^k$                                      & Computation rate of of$E_k$.                                   & 20Kb/slot                           \\
        $W$                                          & Total bandwidth for offloading communication.                  & 100MHz                              \\
        $f$                                          & The carrier frequency in Eq.(\ref{PL}).                        & 2.5GHz                              \\
        $(a,b,\eta_{LoS},\eta_{NLoS})$               & Coefficients of A2G path loss.                                 & (9.61, 0.16, 1, 20)                 \\
        $N_0$                                        & The noise power spectral density of A2G channel.               & -130dB                              \\
        $P^k_{tr,max}$                               & Maximum power for offloading communication.                    & 0.2W                                \\
        $\gamma$                                     & Penalty decay.                                                 & 0.85                                \\
        $\tau$                                       & Target updating weight.                                        & 0.8                                 \\
        $\epsilon$                                   & Probability of random exploration.                             & 0.2                                 \\
        $T_u,E_f$                                    & Period for target updating and federated updating.             & 8                                   \\
        $(\eta_{\mathcal{A}}$, $\eta_{\mathcal{C}})$ & Learning rates for actor/critic nets.                          & $(1\times 10^{-3},2\times 10^{-3})$ \\
        $B$                                          & Batch size of experience replay.                               & 128                                 \\
        \hline
    \end{tabular}
\end{table*}

Secondly, let us claim some basic environment settings of simulations. The main parameter settings are listed in Table \ref{sim_settings}. As for data sources, we set the arrival probability and the generating rate as $0.3$ and $1$Kb/slot. Then, for the attributes of edge devices, we set $r^k_{move}$, $r^k_{obs}$, $r^k_{collect}$ to be $6$, $60$, $40$ in measure of the grid map and the height is fixed. The computing rate of edge process is $20$Kb/slot and the maximum buffer lengths for caching collected data and executed data are both set to be $5$ packet pieces. For edge-cloud communication channel, we set the total offloading bandwidth as $100$MHz, the carrier frequency $f$ in Eq.(\ref{PL}) as $2.5$GHz, the noise power spectral density $N_0$ as $-130$dB, and the max transmission power $P^k_{tr,max}$ as $0.2$W. The coefficients $(a,b,\eta_{LoS},\eta_{NLoS})$ in Eq.(\ref{PL}) and Eq.(\ref{p_los}) are selected to be $(9.61, 0.16, 1, 20)$, which refers to the urban scenarios mentioned in \cite{al2014optimal}. Particularly, the transmission rate of edge-source collection is fixed to be $8$Kb/slot due to the limit of collection cover.

With regard to learning configurations, the decay coefficient $\gamma$ of the system penalty is set to be 0.85. As for hyper parameters, the learning rates of actor and critic are 1e-3, 2e-3, respectively and the batch size of each epoch is 128. As proposed in Section \ref{section alg}, the target updating period $T_u$ and the reserving weight $\tau$ are 8 and $0.8$. In addition, we exploit the $\epsilon$-exploration with $\epsilon=0.2$ and edge-federated mode where parameters are shared every 8 learning epochs.

As for the metrics, we also adopt PAoI (peak AoI) and worst AoI in comparisons. PAoI of data source $S_n$, $\Delta_{p,n}$, is defined as the average peak value of $S_n$'s AoIs, which represents the maximum age of information before a new update is received \cite{costa2014age}. We denote
\begin{equation}
    \label{paoi}
    \overline{\Delta}_p=\frac{1}{N_s}\sum\limits_n\Delta_{p,n}
\end{equation}
as the average PAoI for all data sources. Besides, the worst AoI, defined as the maximum AoI of data sources at each time slot, is considered to evaluate the AoI performance in worst cases.

\subsection{Evaluation Results}
\begin{figure*}
    \centering
    \subfigure[The evolution of MEC system's average age, $\overline{\Delta}(t)$.]{
        \label{age compare} 
        \begin{minipage}{0.38\textwidth}
            \includegraphics[width=1\textwidth]{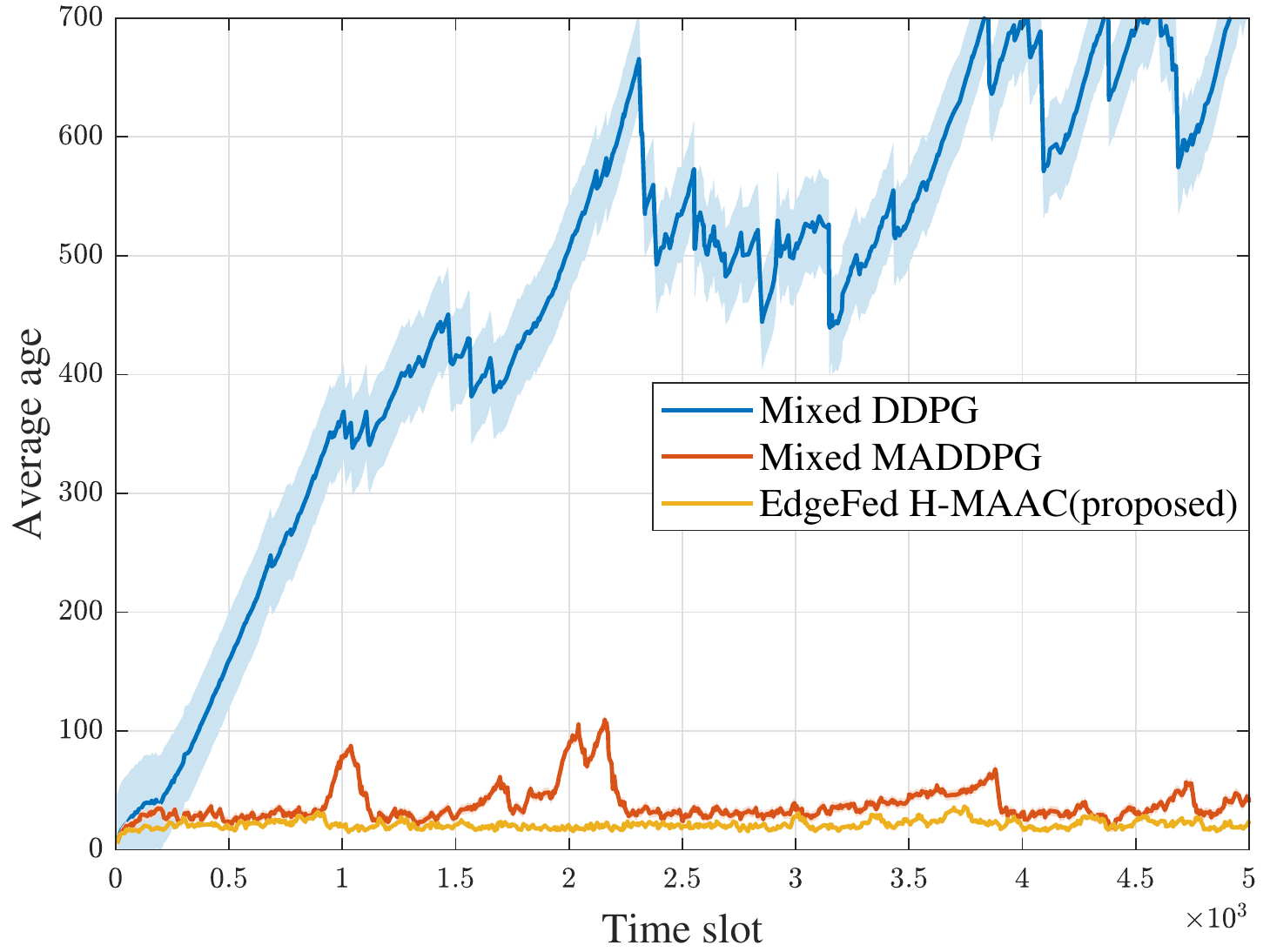}
        \end{minipage}}%
    \subfigure[The worst AoI of all data sources, i.e., $\max \left\{\Delta_k(t)\right\}$.]{
        \label{worst compare} 
        \begin{minipage}{0.38\textwidth}
            \includegraphics[width=1\textwidth]{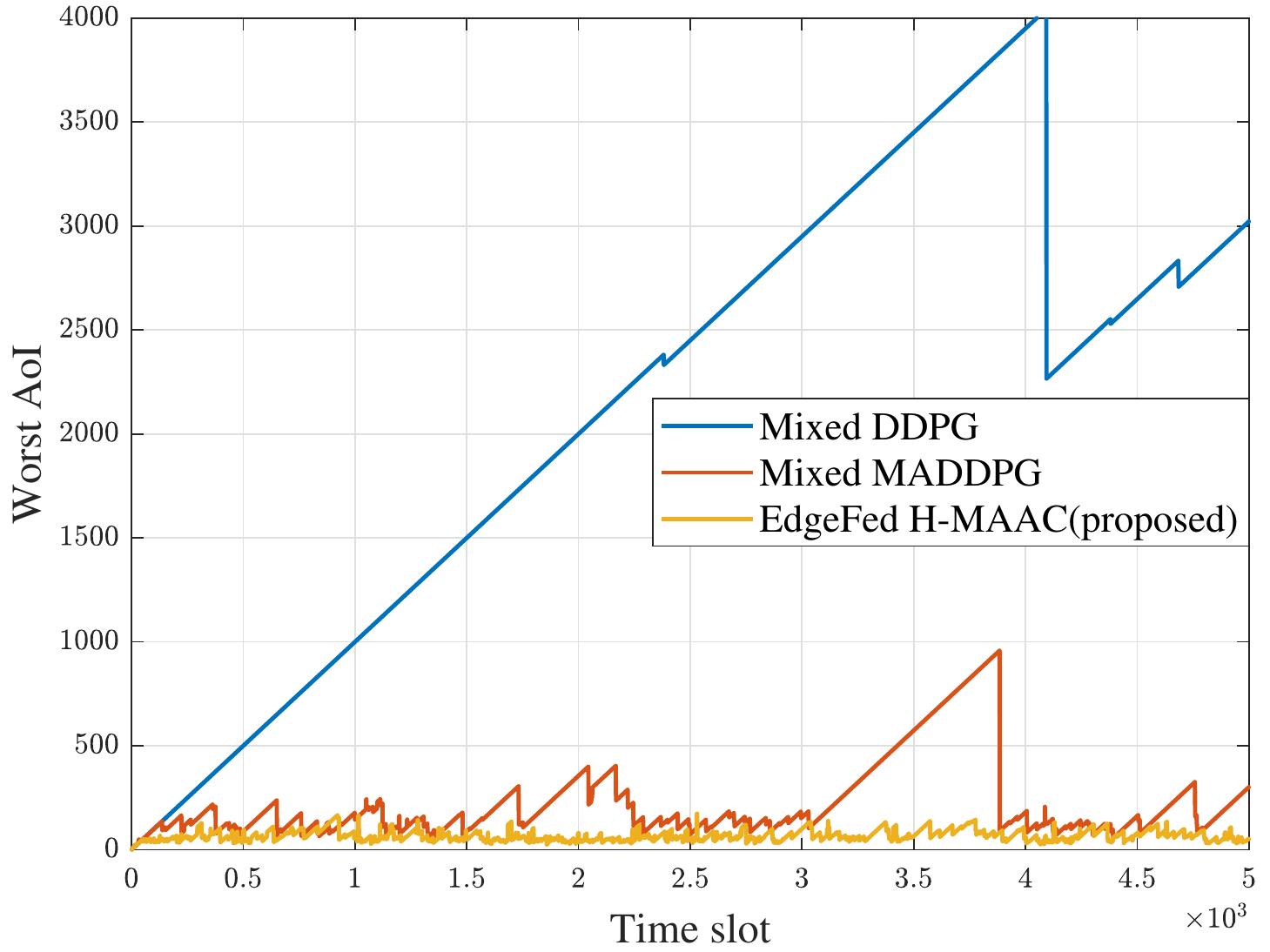}
        \end{minipage}}

    \subfigure[The data volume received at cloud center.]{
        \label{data compare} 
        \begin{minipage}{0.38\textwidth}
            \includegraphics[width=1\textwidth]{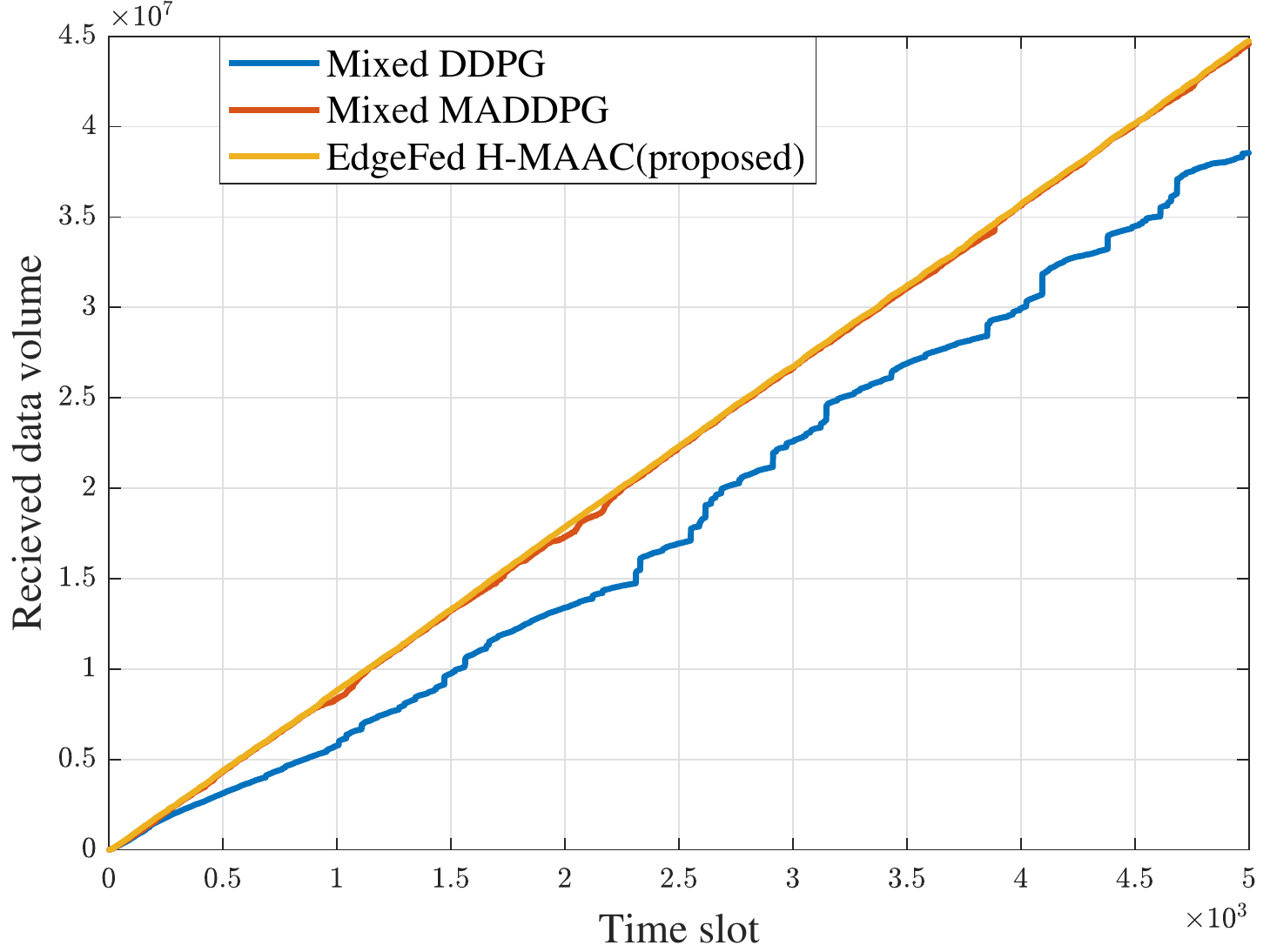}
        \end{minipage}}
    \subfigure[The count of the data packets received by cloud center.]{
        \label{packet compare} 
        \begin{minipage}{0.38\textwidth}
            \includegraphics[width=1\textwidth]{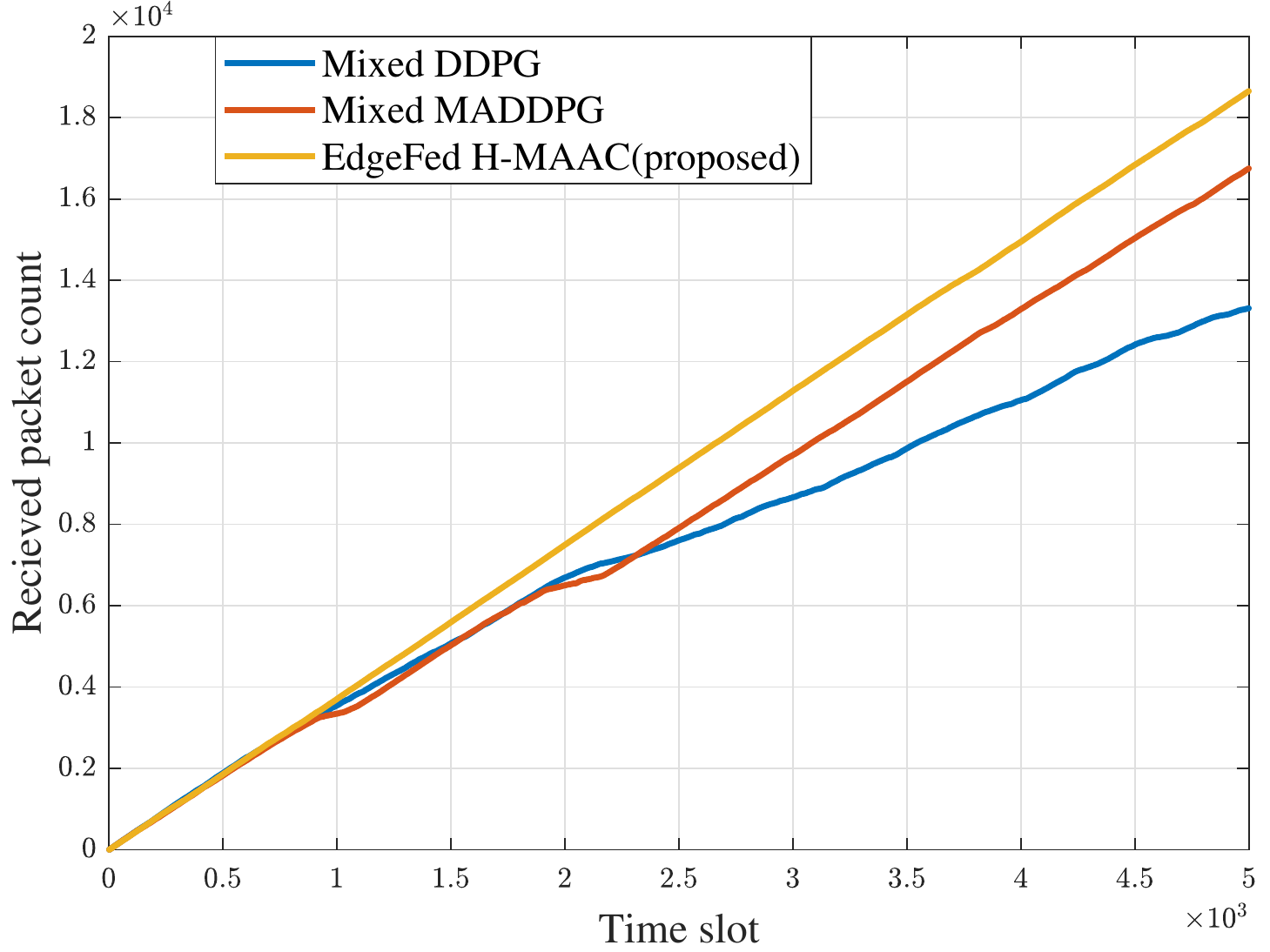}
        \end{minipage}}
    \caption{The comparison of several RL based MEC collaboration methods. (4 edge servers, 30 data sources on a $200\times 200$ map.)}
    \label{compare} 
\end{figure*}

\begin{table}[htbp]
    \caption{\upshape A numerical comparison on AoI metrics.}
    \label{compare_tab}
    \centering
    \begin{tabular}{c|c|c|c|c|c}
        \hline

        \multirow{3}*{Approach}        & \multicolumn{2}{c|}{Average age $\overline{\Delta}(t)$} & \multicolumn{3}{c}{Peak AoI}                                                                              \\
        \cline{2-6}
                                       & \multirow{2}*{mean}                                     & \multirow{2}*{std}           & \multirowcell{2}{peak                                                      \\ count} & \multirowcell{2}{$\overline{\Delta}_p$}&\multirowcell{2}{$var(\Delta_p)$}   \\
                                       &                                                         &                              &                         &                        &                         \\
        \hline
        \multirowcell{2}{Mixed DDPG}   & \multirowcell{2}{473.3}                                 & \multirowcell{2}{174.4}      & \multirowcell{2}{10121} & \multirowcell{2}{26.6} & \multirowcell{2}{156.0} \\&&&&&\\
        \multirowcell{2}{Mixed MADDPG} & \multirowcell{2}{39.3}                                  & \multirowcell{2}{16.7}       & \multirowcell{2}{12842} & \multirowcell{2}{24.9} & \multirowcell{2}{93.0}  \\&&&&&\\
        \multirowcell{2}{EdgeFed H-MAAC                                                                                                                                                                      \\(proposed)} & \multirowcell{2}{\textbf{21.2}}  & \multirowcell{2}{\textbf{3.6}} & \multirowcell{2}{\textbf{13782}} & \multirowcell{2}{\textbf{21.7}} & \multirowcell{2}{\textbf{23.1}} \\&&&&&\\
        \hline
    \end{tabular}
\end{table}

We implement EdgeFed H-MAAC collaboration algorithm, by building the CNN-MLPs mixed networks for edge actor-critic nets and MLP based networks as center agent with TensorFlow. To compare the performance of the proposed frameworks with the other RL approaches, we select two actor-critic based algorithms, DDPG (centralized) and the popular MADDPG (multi-agent) as the baselines where dual neural networks are also exploited to learn the mixed policies. For fairness, we set the same random seeds of the MEC environment for all methods and instead of spending much effort on network tuning, we also fix the random seeds of training processes. Consequently, the results are of generality and can be easily reproduced.

Fig. \ref{compare} and Table \ref{compare_tab} show the comparison results of proposed algorithm (set $\omega=0.5$) and the baselines under the environment with 4 edge devices and 30 data sources on a $200\times 200$ map where the positions of edge devices and data sources are initialized randomly. The average age $\overline{\Delta}(t)$ during the online interaction is shown in Fig. \ref{age compare} where one can find that the mixed DDPG results in highest $\overline{\Delta}(t)$ and the curve is not stable till 5K epoch. However, under mixed MADDPG and EdgeFed H-MAAC, the average age maintains at a lower value after sufficient iterations. Specifically, the statistics of $\overline{\Delta}(t)$ are listed in Table \ref{compare_tab}. EdgeFed H-MAAC attains not only the lowest average age, but also the lowest variance, which means the edge-federated approach outperforms DDPG as well as MADDPG on both system penalty and learning stability. The right-hand part of Table \ref{compare_tab} presents the comparison on PAoI. Evidently, the proposed collaboration algorithm reaps most peak updates and lowest average PAoI, $\overline{\Delta}_p$. This implies that the efficiency of data processing in MEC can be promoted by EdgeFed H-MAAC. The lowest variance of PAoI also demonstrate that all data sources are updated frequently and fairly. Fig. \ref{worst compare} displays the worst AoI of three approaches. EdgeFed H-MAAC also performs the best. One can find that under the centralized DDPG, some sources are ignored for a long time. We explain this as the fact that the centralized collaboration algorithms require larger neural network models with complex structure to extract the relations between the excessive global input states and the local policies of each individual agent, which also leads to the difficulties for training. Besides, Fig. \ref{data compare} and \ref{packet compare} present the volume and the count of the aggregated packets received at the cloud center, namely, the final hop of the MEC system. The curves illustrate that EdgeFed H-MAAC collaboration algorithm also improves the data utility of the MEC system by finishing more data processing within same time.

Additionally, we present the training loss of mixed MADDPG and EdgeFed H-MAAC in Fig. \ref{loss fig} where the actor loss and critic loss of the first edge agent and the center agent are displayed. Similarly, EdgeFed  H-MAAC leads to lower and more stable loss. What's more, the critic loss shown in Fig. \ref{ecloss} and Fig. \ref{ccloss} demonstrate that it is acceptable to assume that the center agent training can be out of consideration and the critic nets fit the $Q$ values well, which refers to the assumption 2) and 3) of the convergence discussion in Section \ref{section convergence}. Beyond our expectation, although the federated parameter sharing only works on edge agents, this scheme also promotes the center agent training significantly.
\begin{figure}[htbp]
    \centering
    \subfigure[Edge Agent Actor Loss.]{
        \label{ealoss} 
        \begin{minipage}{0.48\linewidth}
            \includegraphics[width=1\textwidth]{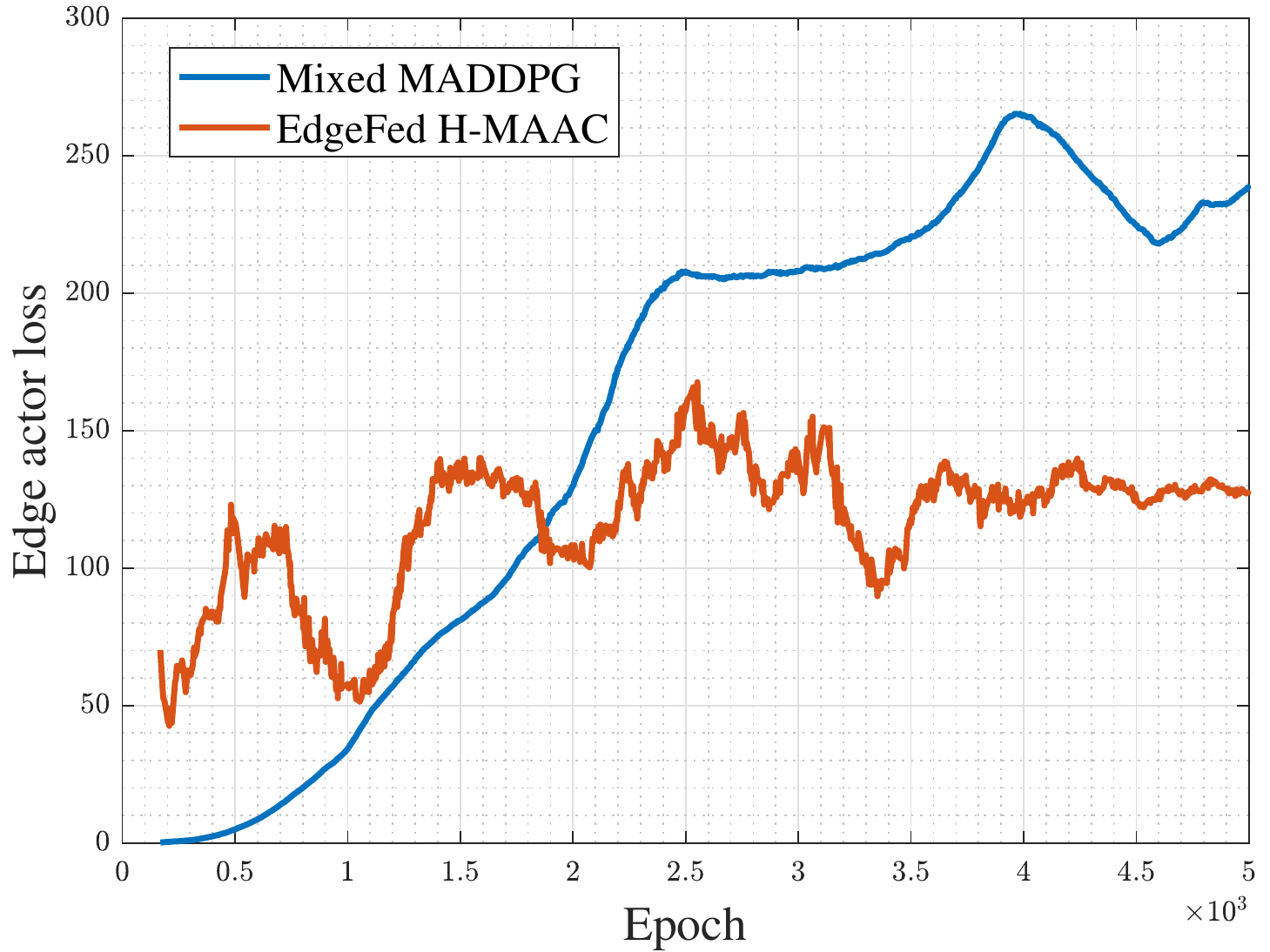}
        \end{minipage}}%
    \subfigure[Edge Agent Critic Loss.]{
        \label{ecloss} 
        \begin{minipage}{0.48\linewidth}
            \includegraphics[width=1\textwidth]{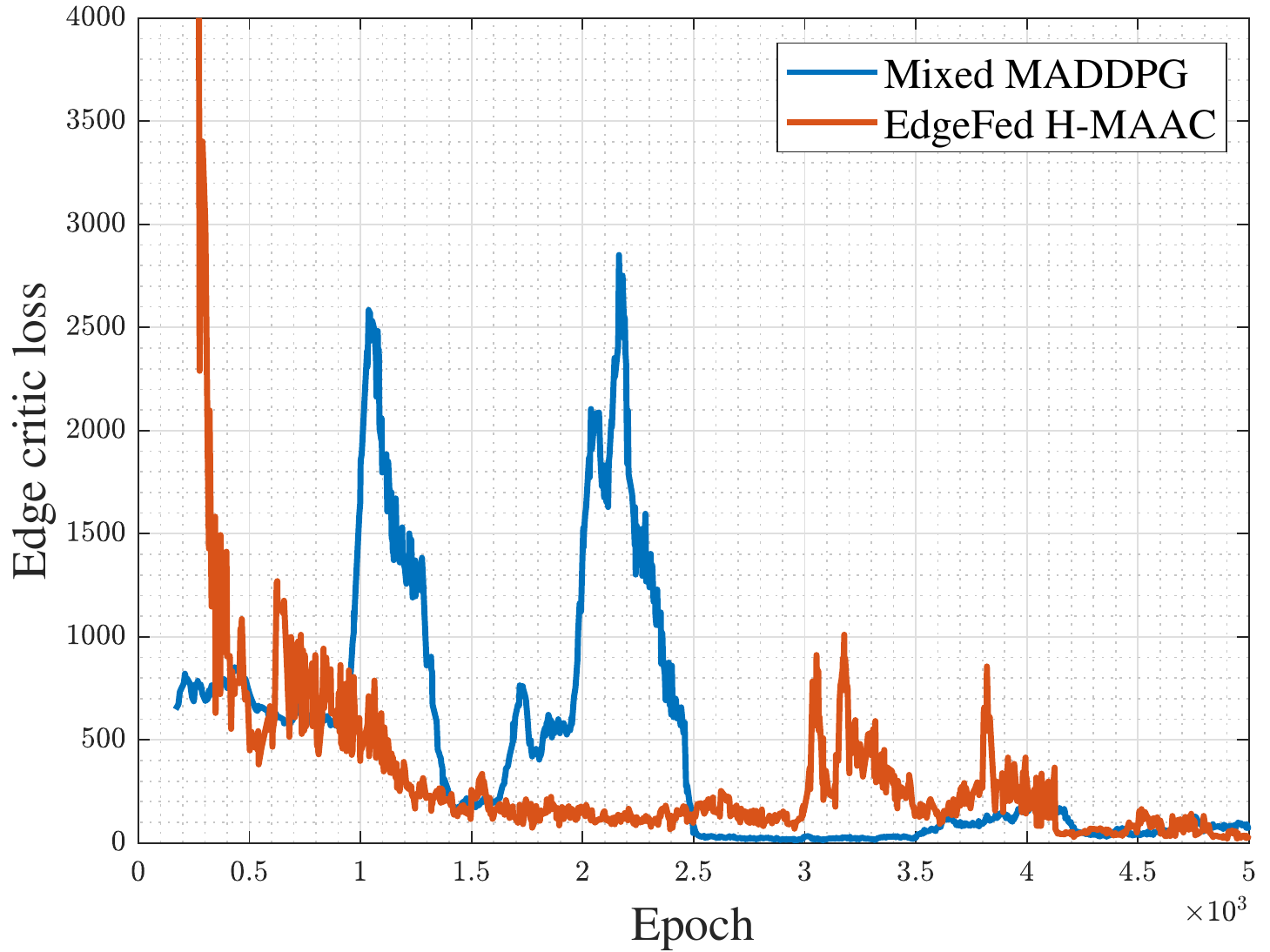}
        \end{minipage}}

    \subfigure[Center Agent Actor Loss.]{
        \label{caloss} 
        \begin{minipage}{0.48\linewidth}
            \includegraphics[width=1\textwidth]{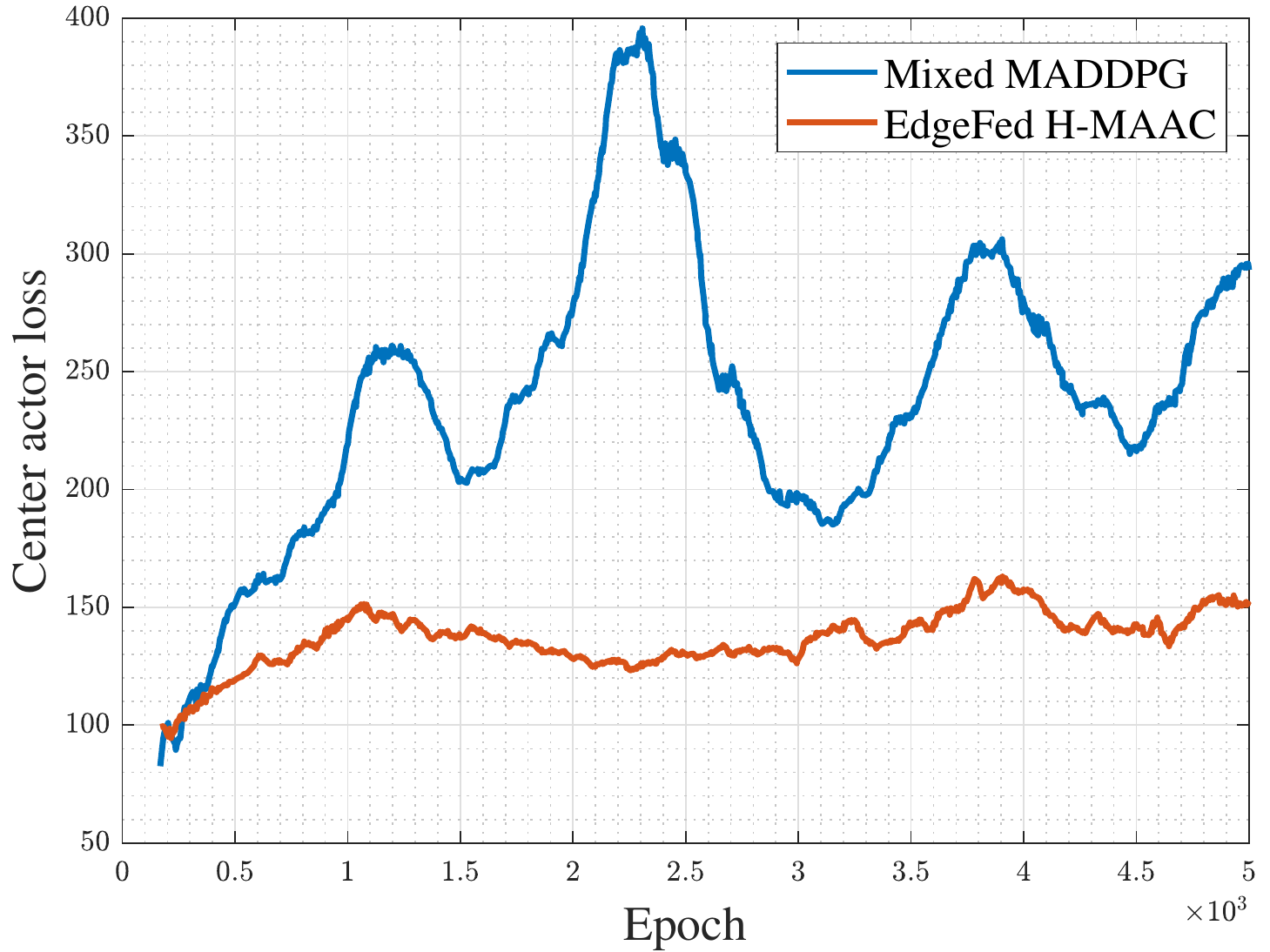}
        \end{minipage}}%
    \subfigure[Center Agent Critic Loss.]{
        \label{ccloss} 
        \begin{minipage}{0.48\linewidth}
            \includegraphics[width=1\textwidth]{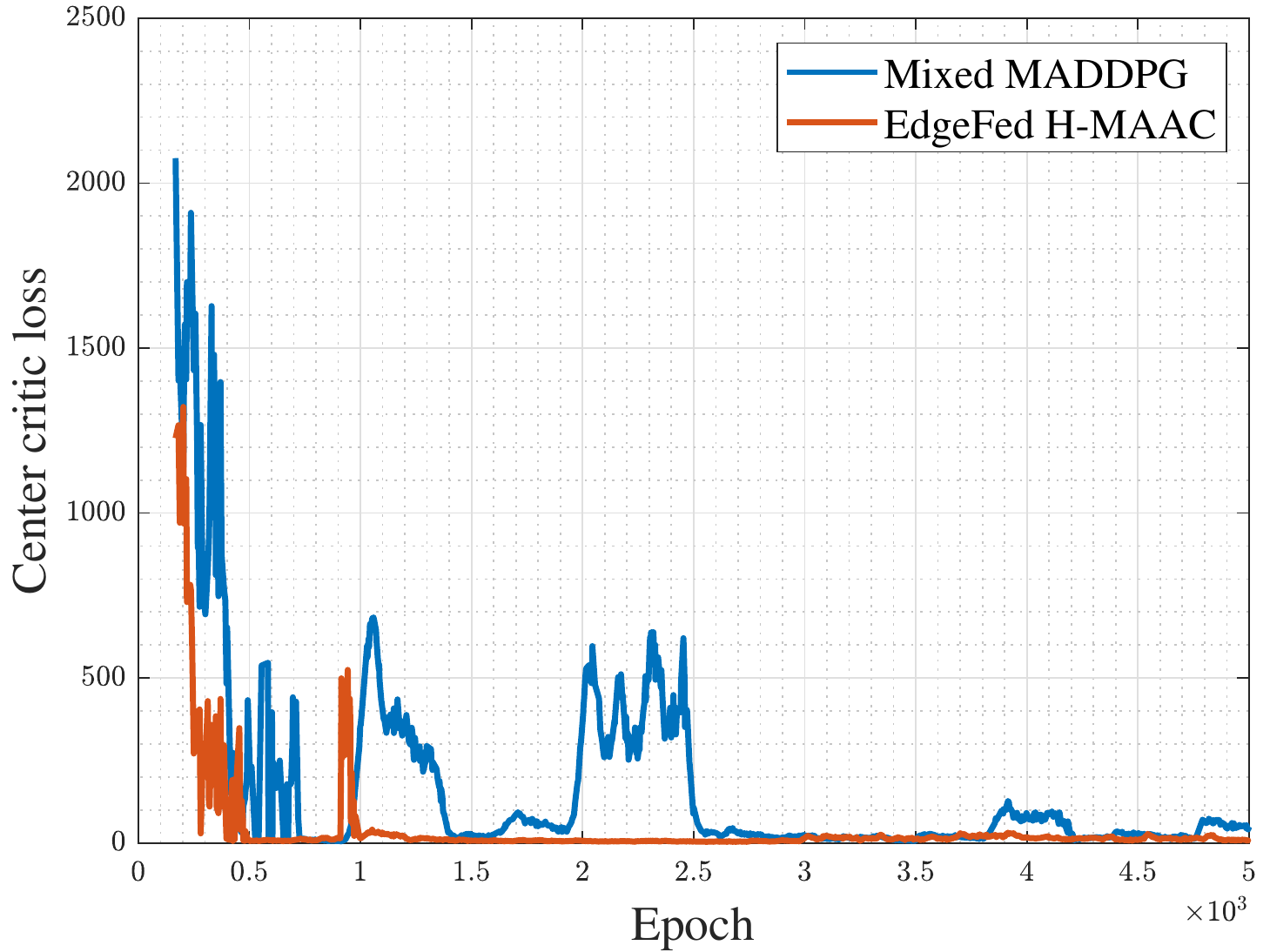}
        \end{minipage}}
    \caption{Training loss of all agents' actor-critic nets.}
    \label{loss fig} 
\end{figure}

\begin{figure}[htbp]
    \centering
    \begin{minipage}{0.4\textwidth}
        \includegraphics[width=1\textwidth]{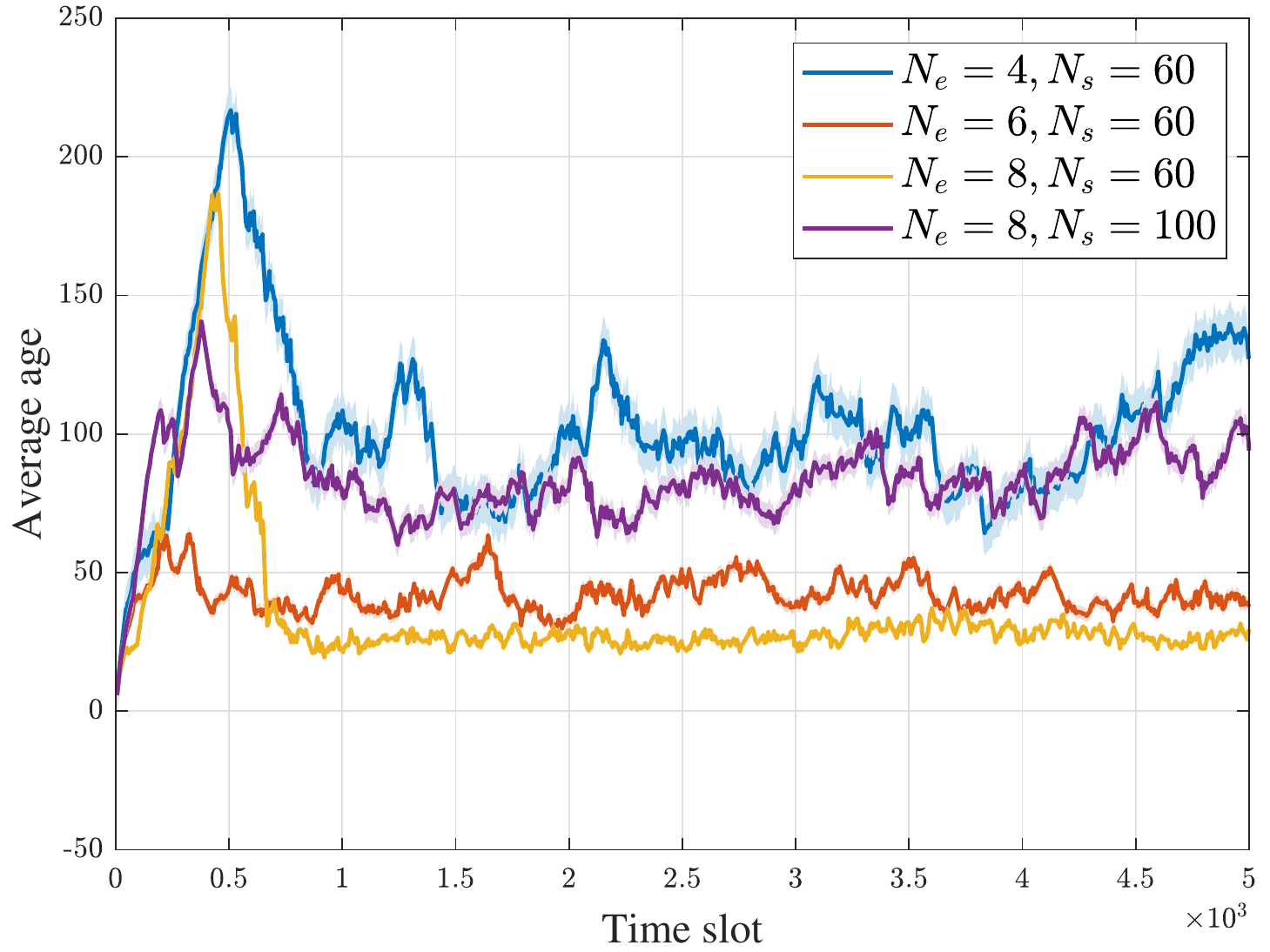}
    \end{minipage}%
    \caption{EdgeFed H-MAAC's $\overline{\Delta}(t)$ performance under different environment settings on a $300\times 300$ map.}
    \label{param figs}
\end{figure}

\begin{figure}[htbp]
    \centering
    \subfigure[$\overline{\Delta}(t)$ performance under different $\omega$. The vertical dot lines represent the convergence time of each $\omega$.]{
        \begin{minipage}{0.47\textwidth}
            \includegraphics[width=1\textwidth]{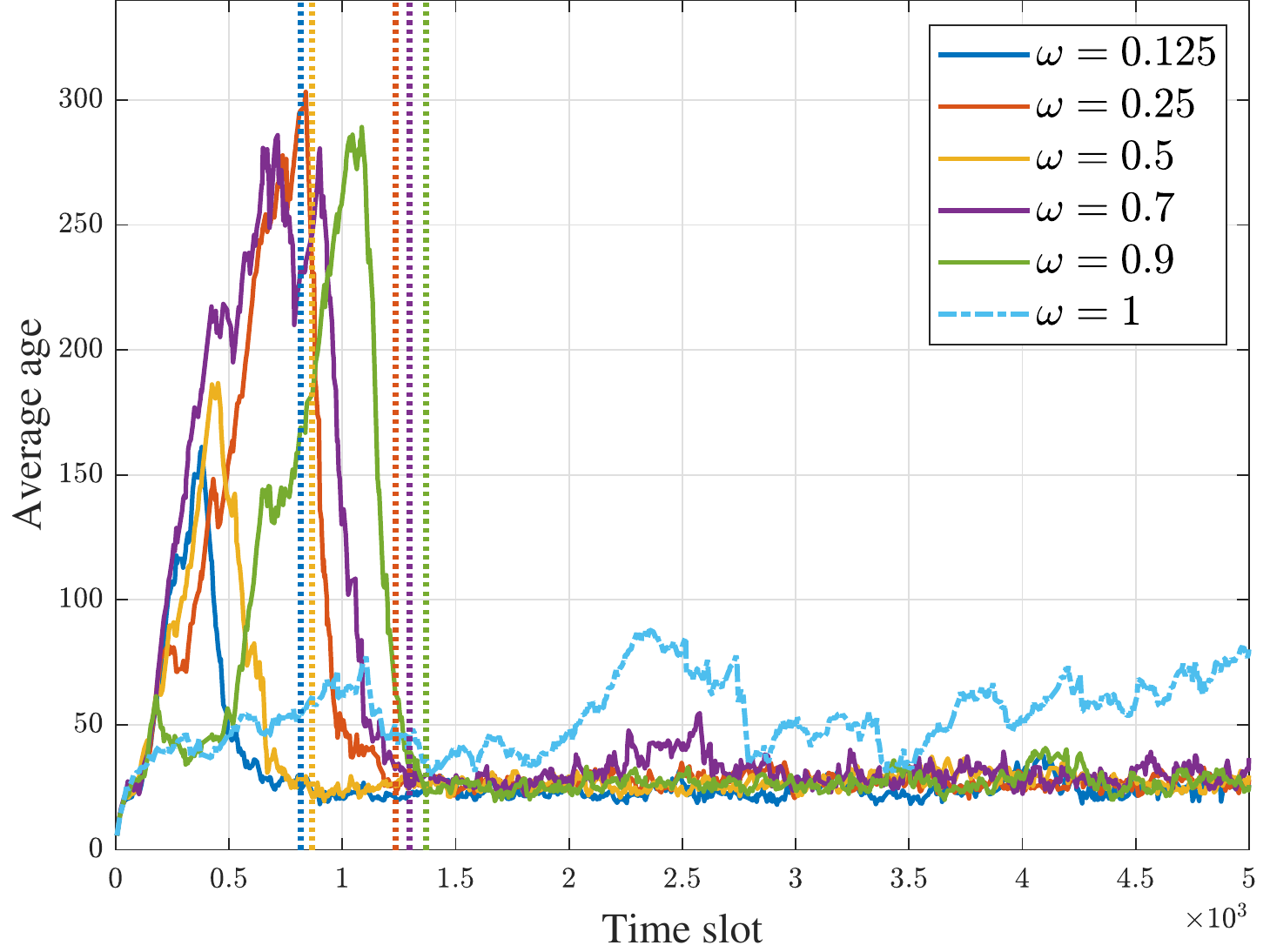}
        \end{minipage}%
        \label{omega curve}}
    \subfigure[Box plots of $\overline{\Delta}(t)$ under different $\omega$.]{
        \begin{minipage}{0.47\textwidth}
            \includegraphics[width=1\textwidth]{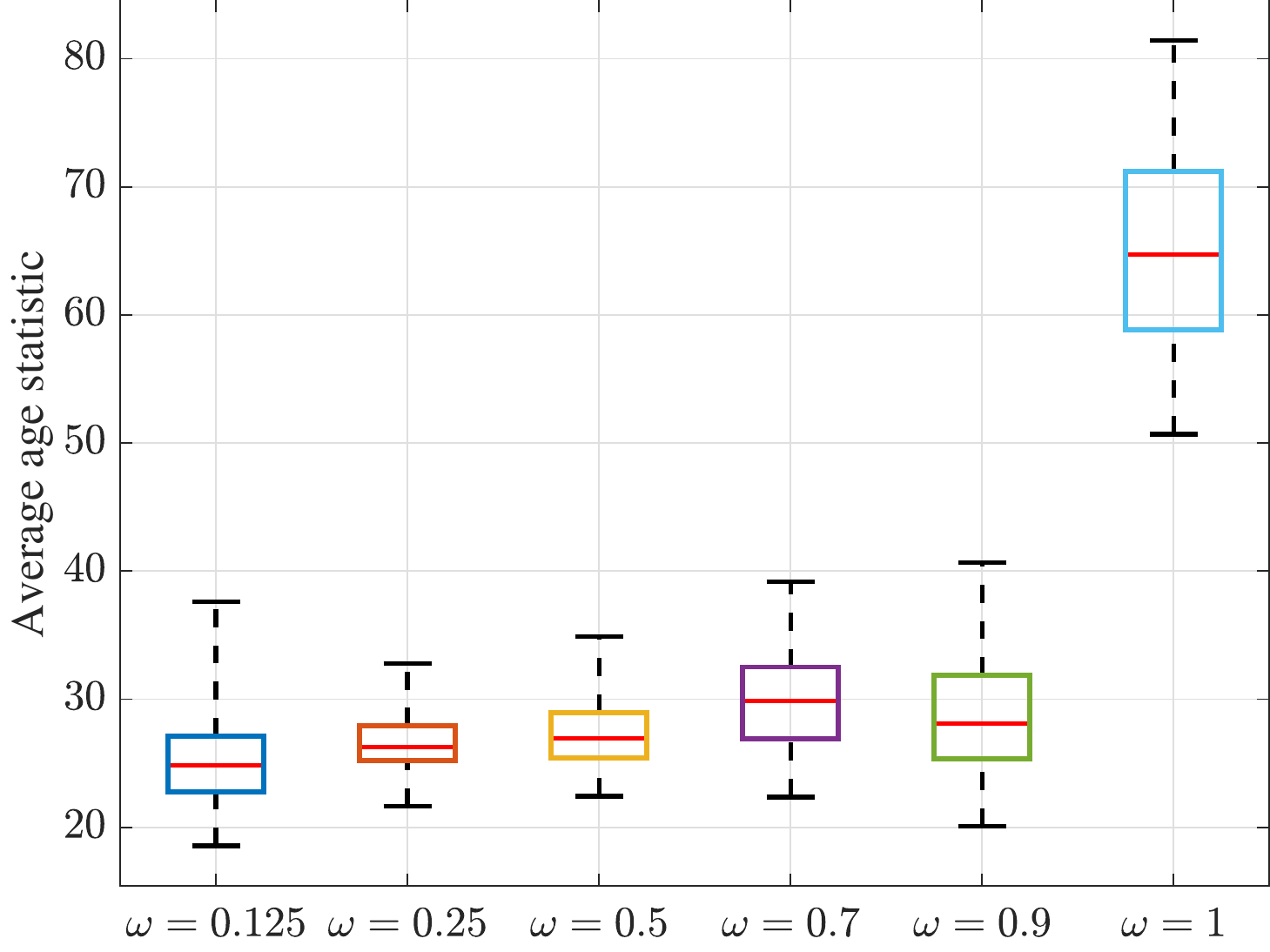}
        \end{minipage}%
        \label{omega box}}
    \caption{EdgeFed H-MAAC performance with different $\omega$. (8 edge devices, 60 data sources on a $300\times 300$ map.)}
    \label{omega figs}
\end{figure}
Then, we investigate the performance under different environment settings. We change the edge number $N_e$ as well as the source number $N_s$ and evaluate the EdgeFed H-MAAC in these cases. As presented in Fig. \ref{param figs}, low average ages are well maintained under different environment settings through EdgeFed H-MAAC collaboration. Consistent with the common sense, the more edge servers or fewer data sources both lead to better timeliness of the MEC systems.

Furthermore, to investigate the impact of the federated factor $\omega$, we set different $\omega$ in the scene with 8 edge devices and 60 data sources on a $300\times 300$ map. Likewise, the settings and the randomness of the MEC environment are identical for all the simulations. Note that the federated factor $\omega$ denote the weight with which each agent retains its model during the parameter sharing, agents will lose their own parameters if $\omega$ becomes too small. Hence, we only explore the cases where $\omega\geq\frac{1}{N_e}$. The results are presented as $\overline{\Delta}(t)$ evolution curves in Fig. \ref{omega curve} and box plots of $\overline{\Delta}(t)$ in Fig. \ref{omega box}. Intuitively, one can find that while $\omega=1$, i.e., the original H-MAAC, the system gets the worst performance and the learning process seems not to converge after 5K iterations, which is consistent with our theoretical analysis in \textbf{Remark \ref{omega=1}}. Additionally, from \textbf{Theorem \ref{convergence thm}} and \textbf{Remark \ref{minimum bound}}, the deviation of the gradients gets minimum when $\omega^*=0.125$. This is verified by the experiment where the vertical dot lines imply that the average age gets the rapidest convergence when $\omega=\omega^*$ and the larger $\omega$ leads to longer convergence time. However, the simulation outcomes show that $\omega=0.25$ and $\omega=0.5$ also perform well as they reach low average ages and even lower variances with similarly rapid convergence. As discussed in \textbf{Remark \ref{efficiency convergence}}, due to the dynamics of the environment and the randomness of the online learning, the constants in Eq.(\ref{l-smooth}) and Eq.(\ref{sgd bound}) are time variant. Besides, the fine-fitness of edge critic nets are not exactly guaranteed, as shown in Fig. \ref{ecloss} where the edge critic loss of EdgeFed H-MAAC does not strictly decrease to 0. For above reasons, one can find the gap between the convergence theorem and the simulation results. We explain such phenomenon as a trade-off between the ideal learning convergence and the system robustness to environment variation. In practice, the fluctuation of the gradients may contribute to learning the features of the stochastic environment. Meanwhile, smaller $\omega$ implies that the model parameters of each edge agent itself are preserved with lower proportion. Particularly, if $\omega=\omega^*$, all edge agents learn the same parameters after federated updating operation. Therefore, all edge agents tend to make same responses to the input states and lose their individuality small $\omega$. For larger federated factor $\omega$, though the gradients are not bounded tightly, it reserves the individuality of each edge agents to counter the stochastic environment in MEC collaboration systems. Thus, as elaborated in Fig. \ref{explain omega}, in such multi-agent cooperative learning framework, $\omega$ with mediate values may balance the learning convergence and the system efficiency. In addition, the simulation results in Fig. \ref{omega figs} can be utilized to design the proper edge-federated learning mode for the proposed multi-agent collaboration MEC algorithm. Approximately, through the above results and the discussions, the recommended interval of the federated factor $\omega$ lies on $[\frac{1}{N_e},0.5]$.

\section{Conclusion}
\label{section conclusion}
We investigated the age sensitive MEC systems and proposed a policy based multi-agent reinforcement learning framework, H-MAAC, for agent intelligent control of the trajectory planning, data scheduling as well as bandwidth allocation. By adopting federated learning mode, we developed the corresponding edge federated online joint collaboration algorithms whose convergence were theoretically proved. We implemented the MEC simulation system and evaluated the proposed algorithms. The outcomes showed that our method has lower average age and better learning stability compared to classical centralized actor-critic RL approaches. Moreover, some other advantages and the inspirations for edge federated design are also discussed according to the simulation results.

For further work, based on the proposed H-MAAC framework, more operations of agents can be expanded such as the power allocation, multitask scheduling and multitask offloading. In addition, the penalty/reward of the system can be flexibly defined which may bring more applications for cooperative MEC systems.

\nocite{*}
\bibliographystyle{IEEEtran}
\bibliography{IEEEabrv, refs}

\begin{thebibliography}{10}
\providecommand{\url}[1]{#1}
\csname url@samestyle\endcsname
\providecommand{\newblock}{\relax}
\providecommand{\bibinfo}[2]{#2}
\providecommand{\BIBentrySTDinterwordspacing}{\spaceskip=0pt\relax}
\providecommand{\BIBentryALTinterwordstretchfactor}{4}
\providecommand{\BIBentryALTinterwordspacing}{\spaceskip=\fontdimen2\font plus
\BIBentryALTinterwordstretchfactor\fontdimen3\font minus
  \fontdimen4\font\relax}
\providecommand{\BIBforeignlanguage}[2]{{%
\expandafter\ifx\csname l@#1\endcsname\relax
\typeout{** WARNING: IEEEtran.bst: No hyphenation pattern has been}%
\typeout{** loaded for the language `#1'. Using the pattern for}%
\typeout{** the default language instead.}%
\else
\language=\csname l@#1\endcsname
\fi
#2}}
\providecommand{\BIBdecl}{\relax}
\BIBdecl

\bibitem{shi2016edge}
W.~Shi, J.~Cao, Q.~Zhang, Y.~Li, and L.~Xu, ``Edge computing: Vision and
  challenges,'' \emph{IEEE internet of things journal}, vol.~3, no.~5, pp.
  637--646, 2016.

\bibitem{sun2016edgeiot}
X.~Sun and N.~Ansari, ``Edgeiot: Mobile edge computing for the internet of
  things,'' \emph{IEEE Communications Magazine}, vol.~54, no.~12, pp. 22--29,
  2016.

\bibitem{hu2015mobile}
Y.~C. Hu, M.~Patel, D.~Sabella, N.~Sprecher, and V.~Young, ``Mobile edge
  computing—a key technology towards 5g,'' \emph{ETSI white paper}, vol.~11,
  no.~11, pp. 1--16, 2015.

\bibitem{mao2017survey}
Y.~Mao, C.~You, J.~Zhang, K.~Huang, and K.~B. Letaief, ``A survey on mobile
  edge computing: The communication perspective,'' \emph{IEEE Communications
  Surveys \& Tutorials}, vol.~19, no.~4, pp. 2322--2358, 2017.

\bibitem{abd2019role}
M.~A. Abd-Elmagid, N.~Pappas, and H.~S. Dhillon, ``On the role of age of
  information in the internet of things,'' \emph{IEEE Communications Magazine},
  vol.~57, no.~12, pp. 72--77, 2019.

\bibitem{abbas2017mobile}
N.~Abbas, Y.~Zhang, A.~Taherkordi, and T.~Skeie, ``Mobile edge computing: A
  survey,'' \emph{IEEE Internet of Things Journal}, vol.~5, no.~1, pp.
  450--465, 2017.

\bibitem{liu2020resource}
B.~Liu, C.~Liu, and M.~Peng, ``Resource allocation for energy-efficient mec in
  noma-enabled massive iot networks,'' \emph{IEEE Journal on Selected Areas in
  Communications}, 2020.

\bibitem{du2018energy}
Y.~Du, K.~Wang, K.~Yang, and G.~Zhang, ``Energy-efficient resource allocation
  in uav based mec system for iot devices,'' in \emph{2018 IEEE Global
  Communications Conference (GLOBECOM)}.\hskip 1em plus 0.5em minus 0.4em\relax
  IEEE, 2018, pp. 1--6.

\bibitem{8767017}
X.~{Hu}, K.~{Wong}, K.~{Yang}, and Z.~{Zheng}, ``Uav-assisted relaying and edge
  computing: Scheduling and trajectory optimization,'' \emph{IEEE Transactions
  on Wireless Communications}, vol.~18, no.~10, pp. 4738--4752, 2019.

\bibitem{mach2017mobile}
P.~Mach and Z.~Becvar, ``Mobile edge computing: A survey on architecture and
  computation offloading,'' \emph{IEEE Communications Surveys \& Tutorials},
  vol.~19, no.~3, pp. 1628--1656, 2017.

\bibitem{zhao2019novel}
Z.~Zhao, R.~Zhao, J.~Xia, X.~Lei, D.~Li, C.~Yuen, and L.~Fan, ``A novel
  framework of three-hierarchical offloading optimization for mec in industrial
  iot networks,'' \emph{IEEE Transactions on Industrial Informatics}, vol.~16,
  no.~8, pp. 5424--5434, 2019.

\bibitem{zhao2019computation}
J.~Zhao, Q.~Li, Y.~Gong, and K.~Zhang, ``Computation offloading and resource
  allocation for cloud assisted mobile edge computing in vehicular networks,''
  \emph{IEEE Transactions on Vehicular Technology}, vol.~68, no.~8, pp.
  7944--7956, 2019.

\bibitem{loven2019edgeai}
L.~Lov{\'e}n, T.~Lepp{\"a}nen, E.~Peltonen, J.~Partala, E.~Harjula,
  P.~Porambage, M.~Ylianttila, and J.~Riekki, ``Edgeai: A vision for
  distributed, edgenative artificial intelligence in future 6g networks,''
  \emph{The 1st 6G Wireless Summit}, pp. 1--2, 2019.

\bibitem{mnih2015human}
V.~Mnih, K.~Kavukcuoglu, D.~Silver, A.~A. Rusu, J.~Veness, M.~G. Bellemare,
  A.~Graves, M.~Riedmiller, A.~K. Fidjeland, G.~Ostrovski \emph{et~al.},
  ``Human-level control through deep reinforcement learning,'' \emph{nature},
  vol. 518, no. 7540, pp. 529--533, 2015.

\bibitem{ye2020mastering}
D.~Ye, Z.~Liu, M.~Sun, B.~Shi, P.~Zhao, H.~Wu, H.~Yu, S.~Yang, X.~Wu, Q.~Guo
  \emph{et~al.}, ``Mastering complex control in moba games with deep
  reinforcement learning.'' in \emph{AAAI}, 2020, pp. 6672--6679.

\bibitem{sutton2018reinforcement}
R.~S. Sutton and A.~G. Barto, \emph{Reinforcement learning: An
  introduction}.\hskip 1em plus 0.5em minus 0.4em\relax MIT press, 2018.

\bibitem{lillicrap2015continuous}
T.~P. Lillicrap, J.~J. Hunt, A.~Pritzel, N.~Heess, T.~Erez, Y.~Tassa,
  D.~Silver, and D.~Wierstra, ``Continuous control with deep reinforcement
  learning,'' \emph{arXiv preprint arXiv:1509.02971}, 2015.

\bibitem{busoniu2008comprehensive}
L.~Busoniu, R.~Babuska, and B.~De~Schutter, ``A comprehensive survey of
  multiagent reinforcement learning,'' \emph{IEEE Transactions on Systems, Man,
  and Cybernetics, Part C (Applications and Reviews)}, vol.~38, no.~2, pp.
  156--172, 2008.

\bibitem{zhuo2019federated}
H.~H. Zhuo, W.~Feng, Q.~Xu, Q.~Yang, and Y.~Lin, ``Federated reinforcement
  learning,'' \emph{arXiv preprint arXiv:1901.08277}, 2019.

\bibitem{ndikumana2019joint}
A.~Ndikumana, N.~H. Tran, T.~M. Ho, Z.~Han, W.~Saad, D.~Niyato, and C.~S. Hong,
  ``Joint communication, computation, caching, and control in big data
  multi-access edge computing,'' \emph{IEEE Transactions on Mobile Computing},
  vol.~19, no.~6, pp. 1359--1374, 2019.

\bibitem{merwaday2015uav}
A.~Merwaday and I.~Guvenc, ``Uav assisted heterogeneous networks for public
  safety communications,'' in \emph{2015 IEEE wireless communications and
  networking conference workshops (WCNCW)}.\hskip 1em plus 0.5em minus
  0.4em\relax IEEE, 2015, pp. 329--334.

\bibitem{sharma2016uav}
V.~Sharma, M.~Bennis, and R.~Kumar, ``Uav-assisted heterogeneous networks for
  capacity enhancement,'' \emph{IEEE Communications Letters}, vol.~20, no.~6,
  pp. 1207--1210, 2016.

\bibitem{emara2020spatiotemporal}
M.~Emara, H.~El~Sawy, M.~C. Filippou, and G.~Bauch, ``Spatiotemporal dependable
  task execution services in mec-enabled wireless systems,'' \emph{IEEE
  Wireless Communications Letters}, 2020.

\bibitem{cao2019intelligent}
B.~Cao, L.~Zhang, Y.~Li, D.~Feng, and W.~Cao, ``Intelligent offloading in
  multi-access edge computing: A state-of-the-art review and framework,''
  \emph{IEEE Communications Magazine}, vol.~57, no.~3, pp. 56--62, 2019.

\bibitem{ning2020partial}
Z.~Ning, P.~Dong, X.~Wang, X.~Hu, J.~Liu, L.~Guo, B.~Hu, R.~Kwok, and V.~C.
  Leung, ``Partial computation offloading and adaptive task scheduling for
  5g-enabled vehicular networks,'' \emph{IEEE Transactions on Mobile
  Computing}, 2020.

\bibitem{matolak2015unmanned}
D.~W. Matolak and R.~Sun, ``Unmanned aircraft systems: Air-ground channel
  characterization for future applications,'' \emph{IEEE Vehicular Technology
  Magazine}, vol.~10, no.~2, pp. 79--85, 2015.

\bibitem{zhou2018uav}
F.~Zhou, Y.~Wu, H.~Sun, and Z.~Chu, ``Uav-enabled mobile edge computing:
  Offloading optimization and trajectory design,'' in \emph{2018 IEEE
  International Conference on Communications (ICC)}.\hskip 1em plus 0.5em minus
  0.4em\relax IEEE, 2018, pp. 1--6.

\bibitem{liu2019uav}
Y.~Liu, K.~Xiong, Q.~Ni, P.~Fan, and K.~B. Letaief, ``Uav-assisted wireless
  powered cooperative mobile edge computing: Joint offloading, cpu control, and
  trajectory optimization,'' \emph{IEEE Internet of Things Journal}, vol.~7,
  no.~4, pp. 2777--2790, 2019.

\bibitem{liu2017latency}
C.-F. Liu, M.~Bennis, and H.~V. Poor, ``Latency and reliability-aware task
  offloading and resource allocation for mobile edge computing,'' in \emph{2017
  IEEE Globecom Workshops (GC Wkshps)}.\hskip 1em plus 0.5em minus 0.4em\relax
  IEEE, 2017, pp. 1--7.

\bibitem{costa2014age}
M.~Costa, M.~Codreanu, and A.~Ephremides, ``Age of information with packet
  management,'' in \emph{2014 IEEE International Symposium on Information
  Theory}.\hskip 1em plus 0.5em minus 0.4em\relax IEEE, 2014, pp. 1583--1587.

\bibitem{wang2021minimizing}
X.~Wang, Z.~Ning, S.~Guo, M.~Wen, and V.~Poor, ``Minimizing the
  age-of-critical-information: An imitation learning-based scheduling approach
  under partial observations,'' \emph{IEEE Transactions on Mobile Computing},
  2021.

\bibitem{chen2020age}
X.~Chen, C.~Wu, T.~Chen, H.~Zhang, Z.~Liu, Y.~Zhang, and M.~Bennis, ``Age of
  information aware radio resource management in vehicular networks: A
  proactive deep reinforcement learning perspective,'' \emph{IEEE Transactions
  on Wireless Communications}, vol.~19, no.~4, pp. 2268--2281, 2020.

\bibitem{liu2018age}
J.~Liu, X.~Wang, B.~Bai, and H.~Dai, ``Age-optimal trajectory planning for
  uav-assisted data collection,'' in \emph{IEEE INFOCOM 2018-IEEE Conference on
  Computer Communications Workshops (INFOCOM WKSHPS)}.\hskip 1em plus 0.5em
  minus 0.4em\relax IEEE, 2018, pp. 553--558.

\bibitem{hu2020aoi}
H.~Hu, K.~Xiong, G.~Qu, Q.~Ni, P.~Fan, and K.~B. Letaief, ``Aoi-minimal
  trajectory planning and data collection in uav-assisted wireless powered iot
  networks,'' \emph{IEEE Internet of Things Journal}, 2020.

\bibitem{tong2020deep}
P.~Tong, J.~Liu, X.~Wang, B.~Bai, and H.~Dai, ``Deep reinforcement learning for
  efficient data collection in uav-aided internet of things,'' in \emph{2020
  IEEE International Conference on Communications Workshops (ICC
  Workshops)}.\hskip 1em plus 0.5em minus 0.4em\relax IEEE, 2020, pp. 1--6.

\bibitem{wang2019smart}
J.~Wang, L.~Zhao, J.~Liu, and N.~Kato, ``Smart resource allocation for mobile
  edge computing: A deep reinforcement learning approach,'' \emph{IEEE
  Transactions on emerging topics in computing}, 2019.

\bibitem{li2018deep}
J.~Li, H.~Gao, T.~Lv, and Y.~Lu, ``Deep reinforcement learning based
  computation offloading and resource allocation for mec,'' in \emph{2018 IEEE
  Wireless Communications and Networking Conference (WCNC)}.\hskip 1em plus
  0.5em minus 0.4em\relax IEEE, 2018, pp. 1--6.

\bibitem{wan2019towards}
S.~Wan, J.~Lu, P.~Fan, and K.~B. Letaief, ``Towards big data processing in iot:
  Path planning and resource management of uav base stations in mobile-edge
  computing system,'' \emph{IEEE Internet of Things Journal}, 2019.

\bibitem{chen2018optimized}
X.~Chen, H.~Zhang, C.~Wu, S.~Mao, Y.~Ji, and M.~Bennis, ``Optimized computation
  offloading performance in virtual edge computing systems via deep
  reinforcement learning,'' \emph{IEEE Internet of Things Journal}, vol.~6,
  no.~3, pp. 4005--4018, 2018.

\bibitem{wang2020multi1}
X.~Wang, Z.~Ning, and S.~Guo, ``Multi-agent imitation learning for pervasive
  edge computing: a decentralized computation offloading algorithm,''
  \emph{IEEE Transactions on Parallel and Distributed Systems}, vol.~32, no.~2,
  pp. 411--425, 2020.

\bibitem{peng2020multi}
H.~Peng and X.~Shen, ``Multi-agent reinforcement learning based resource
  management in mec-and uav-assisted vehicular networks,'' \emph{IEEE Journal
  on Selected Areas in Communications}, 2020.

\bibitem{wang2020multi}
L.~Wang, K.~Wang, C.~Pan, W.~Xu, N.~Aslam, and L.~Hanzo, ``Multi-agent deep
  reinforcement learning based trajectory planning for multi-uav assisted
  mobile edge computing,'' \emph{IEEE Transactions on Cognitive Communications
  and Networking}, 2020.

\bibitem{zhang2020uav}
Y.~Zhang, Z.~Mou, F.~Gao, J.~Jiang, R.~Ding, and Z.~Han, ``Uav-enabled secure
  communications by multi-agent deep reinforcement learning,'' \emph{IEEE
  Transactions on Vehicular Technology}, vol.~69, no.~10, pp. 11\,599--11\,611,
  2020.

\bibitem{kumar2017federated}
S.~Kumar, P.~Shah, D.~Hakkani-Tur, and L.~Heck, ``Federated control with
  hierarchical multi-agent deep reinforcement learning,'' \emph{arXiv preprint
  arXiv:1712.08266}, 2017.

\bibitem{wang2020federated}
X.~Wang, C.~Wang, X.~Li, V.~C. Leung, and T.~Taleb, ``Federated deep
  reinforcement learning for internet of things with decentralized cooperative
  edge caching,'' \emph{IEEE Internet of Things Journal}, 2020.

\bibitem{liu2016delay}
J.~Liu, Y.~Mao, J.~Zhang, and K.~B. Letaief, ``Delay-optimal computation task
  scheduling for mobile-edge computing systems,'' in \emph{2016 IEEE
  International Symposium on Information Theory (ISIT)}.\hskip 1em plus 0.5em
  minus 0.4em\relax IEEE, 2016, pp. 1451--1455.

\bibitem{al2014optimal}
A.~Al-Hourani, S.~Kandeepan, and S.~Lardner, ``Optimal lap altitude for maximum
  coverage,'' \emph{IEEE Wireless Communications Letters}, vol.~3, no.~6, pp.
  569--572, 2014.

\bibitem{kosta2017age}
A.~Kosta, N.~Pappas, and V.~Angelakis, ``Age of information: A new concept,
  metric, and tool,'' \emph{Foundations and Trends in Networking}, vol.~12,
  no.~3, pp. 162--259, 2017.

\bibitem{costa2016age}
M.~Costa, M.~Codreanu, and A.~Ephremides, ``On the age of information in status
  update systems with packet management,'' \emph{IEEE Transactions on
  Information Theory}, vol.~62, no.~4, pp. 1897--1910, 2016.

\bibitem{wiering2012reinforcement}
M.~Wiering and M.~Van~Otterlo, \emph{Reinforcement learning}.\hskip 1em plus
  0.5em minus 0.4em\relax Springer, 2012, vol.~12.

\bibitem{van2012reinforcement}
M.~Van~Otterlo and M.~Wiering, ``Reinforcement learning and markov decision
  processes,'' in \emph{Reinforcement Learning}.\hskip 1em plus 0.5em minus
  0.4em\relax Springer, 2012, pp. 3--42.

\bibitem{littman1994markov}
M.~L. Littman, ``Markov games as a framework for multi-agent reinforcement
  learning,'' in \emph{Machine learning proceedings 1994}.\hskip 1em plus 0.5em
  minus 0.4em\relax Elsevier, 1994, pp. 157--163.

\bibitem{panait2005cooperative}
L.~Panait and S.~Luke, ``Cooperative multi-agent learning: The state of the
  art,'' \emph{Autonomous agents and multi-agent systems}, vol.~11, no.~3, pp.
  387--434, 2005.

\bibitem{puterman2014markov}
M.~L. Puterman, \emph{Markov decision processes: discrete stochastic dynamic
  programming}.\hskip 1em plus 0.5em minus 0.4em\relax John Wiley \& Sons,
  2014.

\bibitem{lowe2017multi}
R.~Lowe, Y.~I. Wu, A.~Tamar, J.~Harb, O.~P. Abbeel, and I.~Mordatch,
  ``Multi-agent actor-critic for mixed cooperative-competitive environments,''
  in \emph{Advances in neural information processing systems}, 2017, pp.
  6379--6390.

\bibitem{arulkumaran2017brief}
K.~Arulkumaran, M.~P. Deisenroth, M.~Brundage, and A.~A. Bharath, ``A brief
  survey of deep reinforcement learning,'' \emph{arXiv preprint
  arXiv:1708.05866}, 2017.

\bibitem{yang2019federated}
Q.~Yang, Y.~Liu, T.~Chen, and Y.~Tong, ``Federated machine learning: Concept
  and applications,'' \emph{ACM Transactions on Intelligent Systems and
  Technology (TIST)}, vol.~10, no.~2, pp. 1--19, 2019.

\bibitem{konevcny2016federated}
J.~Kone{\v{c}}n{\`y}, H.~B. McMahan, F.~X. Yu, P.~Richt{\'a}rik, A.~T. Suresh,
  and D.~Bacon, ``Federated learning: Strategies for improving communication
  efficiency,'' \emph{arXiv preprint arXiv:1610.05492}, 2016.

\bibitem{wang2018cooperative}
J.~Wang and G.~Joshi, ``Cooperative sgd: A unified framework for the design and
  analysis of communication-efficient sgd algorithms,'' \emph{arXiv preprint
  arXiv:1808.07576}, 2018.

\bibitem{balan2017lipschitz}
R.~Balan, M.~Singh, and D.~Zou, ``Lipschitz properties for deep convolutional
  networks,'' \emph{arXiv preprint arXiv:1701.05217}, 2017.

\bibitem{latorre2020lipschitz}
F.~Latorre, P.~Rolland, and V.~Cevher, ``Lipschitz constant estimation of
  neural networks via sparse polynomial optimization,'' \emph{arXiv preprint
  arXiv:2004.08688}, 2020.

\bibitem{haddadpour2019convergence}
F.~Haddadpour and M.~Mahdavi, ``On the convergence of local descent methods in
  federated learning,'' \emph{arXiv preprint arXiv:1910.14425}, 2019.

\bibitem{brockman2016openai}
G.~Brockman, V.~Cheung, L.~Pettersson, J.~Schneider, J.~Schulman, J.~Tang, and
  W.~Zaremba, ``Openai gym,'' \emph{arXiv preprint arXiv:1606.01540}, 2016.

\bibitem{fan2018application}
Q.~Fan and N.~Ansari, ``Application aware workload allocation for edge
  computing-based iot,'' \emph{IEEE Internet of Things Journal}, vol.~5, no.~3,
  pp. 2146--2153, 2018.

\end{thebibliography}

\end{document}